% !TeX spellcheck = en_US
\documentclass[a4]{article}
\usepackage{a4wide}
\usepackage[utf8]{inputenc}
\usepackage{amsthm}
\usepackage{cleveref}
\crefname{lemma}{lemma}{lemmas}
\Crefname{lemma}{Lemma}{Lemmas}
\crefname{corollary}{corollary}{Corollaries}
\Crefname{corollary}{corollary}{Corollaries}
\Crefname{algorithm}{Algorithm}{Algorithms}
  
\usepackage{authblk}
\usepackage{color}
\usepackage{graphicx}
\usepackage{url}%rely
\usepackage{amsmath,amssymb,latexsym}
\usepackage[vlined,english,boxed]{algorithm2e}
\usepackage{amsmath} % for styled algorithms
      \IncMargin{1em}  % margin of algorithms
      \SetKwInput{KwData}{Input}
      \SetKwInput{KwResult}{Output}
      \SetKwIF{If}{ElseIf}{Else}{If}{then}{else if}{else}{endif}
      \SetKwFor{ForEach}{For each}{do}{done}
      \SetKwFor{While}{While}{do}{done}
      \SetKwFor{Procedure}{Procedure}{}{}
      \SetKw{Return}{Return}%
      \SetKwComment{Comment}{/* }{ */}
      \SetSideCommentRight
      \DontPrintSemicolon

\usepackage{tikz}

% bullets :

%\renewcommand{\labelitemii}{$\cdot$}
%\renewcommand{\labelitemiii}{$\diamond$}
%\renewcommand{\labelitemiv}{$\ast$}

%
\def\comment#1#2{{\color{red}[#1: {#2}]}}

\def\leonardo#1{\comment{LL}{#1}}
\def\leonardo#1{}
\long\def\jump#1\finjump{}
\theoremstyle{definition}
\newtheorem{claim}{\emph{Claim}}
\newtheorem{proposition}{Proposition}

\newtheorem{theorem}{Theorem}
\newtheorem{lemma}{Lemma}
\newtheorem{corollary}{Corollary}
\newcommand{\card}[1]{\left|{#1}\right|}

\newcommand{\paren}[1]{\left({#1}\right)}

\newcommand{\set}[1]{\left\{{#1}\right\}}

\newcommand{\A}{\mathcal{A}}

\newcommand{\D}{\mathcal{D}}
\newcommand{\R}{\mathcal{R}}
\newcommand{\C}{\mathcal{C}}
\newcommand{\UC}{\mathcal{UC}}

\newcommand{\I}{\mathcal{I}}

 % ?? LV: would be confusing to use this

%
\newcommand{\mb}[1]{\mbox{\it #1}}
%

%
%\graphicspath{{figs/}}

\newcommand{\average}[1]{\overline{#1}}

\renewcommand{\k}{\average{k}}
\newcommand{\K}{\average{K}}

%% \author[2]{Yacine Boufkhad}
%% %\author[2]{Ricardo De La Paz}
%% \author[1]{Leonardo Linguaglossa}
%% \author[3]{Fabien Mathieu}
%% \author[3]{Diego Perino}
%% \author[1,2]{Laurent Viennot}
%% \affil[1]{Inria}
%% \affil[2]{Université Paris Diderot - Paris 7}
%% \affil[3]{Bell Labs, Nokia}

%\usepackage{microtype}%if unwanted, comment out or use option "draft"

%\graphicspath{{./graphics/}}%helpful if your graphic files are in another directory

\bibliographystyle{plain}% the recommended bibstyle

\title{Efficient Loop Detection in Forwarding Networks and Representing Atoms in a Field of Sets}

\author{Yacine Boufkhad\thanks{Université Paris Diderot}
  \and Leonardo Linguaglossa\thanks{Inria}
  \and Fabien Mathieu\thanks{Nokia Bell Labs}
  \and Diego Perino\thanks{Telefonica Research}
  \and Laurent Viennot\thanks{Inria}}

%\titlerunning{Loop Detection in Forwarding Networks and Atom Computation} %optional, in case that the title is too long; the running title should fit into the top page column

% Author macros::begin %%%%%%%%%%%%%%%%%%%%%%%%%%%%%%%%%%%%%%%%%%%%%%%%
%\authorrunning{Y. Boufkhad, L. Linguaglossa, F. Mathieu, D. Perino and L. Viennot} %mandatory. First: Use abbreviated first/middle names. Second (only in severe cases): Use first author plus 'et. al.'

%\Copyright{Yacine Boufkhad, Leonardo Linguaglossa, Fabien Mathieu, Diego Perino and Laurent Viennot}%mandatory, please use full first names. LIPIcs license is "CC-BY";  http://creativecommons.org/licenses/by/3.0/

%\subjclass{C.2.2 Protocol verification ; C.2.6 Routers}% mandatory: Please choose ACM 1998 classifications from http://www.acm.org/about/class/ccs98-html . E.g., cite as "F.1.1 Models of Computation". 
% mandatory: Please provide 1-5 keywords
% Author macros::end %%%%%%%%%%%%%%%%%%%%%%%%%%%%%%%%%%%%%%%%%%%%%%%%%

%% %Editor-only macros:: begin (do not touch as author)%%%%%%%%%%%%%%%%%%%%%%%%%%%%%%%%%%
%% \serieslogo{}%please provide filename (without suffix)
%% \volumeinfo%(easychair interface)
%% {Billy Editor and Bill Editors}% editors
%% {2}% number of editors: 1, 2, ....
%% {Conference title on which this volume is based on}% event
%% {1}% volume
%% {1}% issue
%% {1}% starting page number
%% \EventShortName{}
%% \DOI{10.4230/LIPIcs.xxx.yyy.p}% to be completed by the volume editor
%% % Editor-only macros::end %%%%%%%%%%%%%%%%%%%%%%%%%%%%%%%%%%%%%%%%%%%%%%%

\begin{document}

\maketitle

\begin{abstract}
The problem of detecting loops in a forwarding network is known to be
NP-complete when general rules such as wildcard expressions are used. Yet,
network analyzer tools such as Netplumber (Kazemian et al., NSDI'13) or
Veriflow (Khurshid et al., NSDI'13) efficiently solve this problem in networks
with thousands of forwarding rules.  In this paper, we complement such
experimental validation of practical heuristics with the first provably
efficient algorithm in the context of general rules.  Our main tool is a
canonical representation of the atoms (i.e. the minimal non-empty sets) of the
field of sets generated by a collection of sets. This tool is particularly
suited when the intersection of two sets can be efficiently computed and
represented. 
%% This is the case for forwarding networks, where the predicate
%% filter associated to a rule can be seen seen as a compact data-structure
%% representing the set of headers that match the rule.  The header classes of the
%% network (sets of headers matching the same rules) embrace all possible
%% forwarding behaviors and their number measures how many tests are classically
%% performed for forwarding loop detection. These classes are indeed the atoms of
%% the field of sets generated by the collection of all predicate filters of the
%% network.  Following this equivalence, we first provide an efficient
%% representation of atoms that allows to obtain the 
In the case of forwarding networks, each forwarding rule is associated with
the set of packet headers it matches. The atoms then correspond to classes
of headers with same behavior in the network. We propose an algorithm for
atom computation and provide the first polynomial time
algorithm for loop detection in terms of number of classes (which can
be exponential in general). This contrasts
with previous methods that can be exponential, even in simple cases with linear
number of classes.  Second, we introduce a notion of network dimension captured
by the overlapping degree of forwarding rules. The values of this measure
appear to be very low in practice and constant overlapping degree ensures
polynomial number of header classes. Forwarding loop detection is thus
polynomial in forwarding networks with constant overlapping degree.
\end{abstract}

\textbf{Keywords:} Forwarding tables, Loop, Software-defined networking, %Routing 
Field of sets%, Atoms%, Sigma-algebra

\section{Introduction} % -----------------------------------------------------

With the multiplication of network protocols, %that interact within the same
%devices,
 network analysis has become an important and challenging task.  We
focus on a key diagnosis task: detecting possible forwarding loops. Given a
network and node forwarding tables, the problem
consists in testing whether there exists a packet header $h$ and a directed
cycle in the network topology such that a packet with header $h$ will
indefinitely loop along the cycle.
%Assuming that forwarding tables are ordered
%according to priority of rules, this means that the first rule matched by $h$
%in the table of each node in the cycle indicates to forward packets with header
%$h$ to the next node in the cycle.
This problem is indeed NP-complete as noted
in \cite{anteater}.  Its hardness comes from the use of compact representations
for predicate filters: the set of headers that match a rule is classically
represented by a prefix in IP forwarding, a general wildcard expression in
Software-Defined Networking (SDN), value ranges in firewall rules, or even
a mix of such representations if several header fields are considered. 
%% Routing in
%% practical networks indeed takes into account several fields and may mix this
%% various representations. In this paper, we will focus on wildcard
%% expressions for the clarity of the discussion as wildcard expressions for
%% separate header fields can be concatenated to a single one. However, the
%% results generalize to most classical compact data structures for representing
%% sets of bit strings.

We first give a toy example of forwarding loop problem where
the predicate filter of each rule is given by a wildcard expression, that is an $\ell$-letter string in $\set{1,0,*}^\ell$. Such an expression represents
the set all $\ell$-bit headers obtained by replacing each $*$ of the
expression by either $0$ or $1$. A packet with header in that set is said to
match the rule.
%It is associated with the action to be taken
%on packets with header in that set (such packets or headers are 
%said to match the rule): drop, forward to a neighbor or deliver
%locally.
%as drop or forward that should be taken on packets 
% (Wildcard matching can be seen as generalization of prefix matching.)
Figure~\ref{fig:one_node_loop} illustrates a one node network with wildcard
expressions of $\ell=4$ letters. Rules are tested from top to bottom.  All rules
indicate to drop packets except the last one %that applies to any header and
%indicates to 
that forwards packets to the node itself. This network contains a
forwarding loop if there exists a header $x_1x_2x_3x_4$ that %the last rule can applied to some header $x_1x_2x_3x_4$
matches no rule except the last one.
%More precisely,
%this means that $x_1x_2x_3x_4$ must match the last rule and none of the
%previous ones. 
Not matching a rule as $110*$ corresponds to having
$x_1=0$, $x_2=0$, or $x_3=1$.
%, or equivalently to satisfy that the boolean formula
%$\overline{x_1}\vee\overline{x_2}\vee x_3$.  
This one node network
 thus has a forwarding loop iff the formula
$(\overline{x_1}\vee\overline{x_2}\vee\overline{x_3}\vee\overline{x_4})\wedge
(\overline{x_1}\vee\overline{x_2}\vee\overline{x_3}\vee x_4)\wedge
(\overline{x_1}\vee\overline{x_2}\vee x_3)\wedge (\overline{x_1}\vee x_2)\wedge
(x_1)$ is satisfiable, which is not the case.
% and the network has no
% forwarding loop even though its topology has a directed cycle. 
This simple example can easily be generalized to reduce SAT to forwarding loop
detection in networks with wildcard rules. It also points out a key
problem: testing the emptiness of expressions such as
$r_p\setminus\cup_{i=1..p-1}r_i$ where $r_1,\ldots,r_p$ are the sets associated
to $p$ rules.

\vspace{-5mm}
\begin{figure}[h]
  \begin{center}
    \includegraphics[width=.4\textwidth]{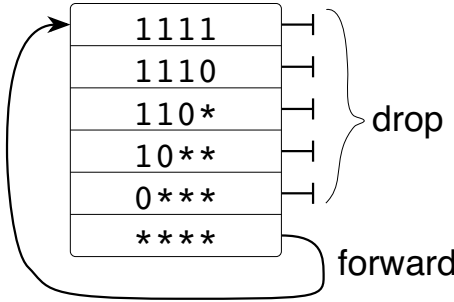}
    \caption{Does this one node network have a forwarding loop ?}
    \label{fig:one_node_loop}
  \end{center}
\end{figure}
\vspace{-5mm}

As packet headers in practical networks such as Internet typically have
hundreds of bits, % (see for example Table 12 from \cite{openflow}), exhaustive
search of the header space is completely out of reach.  The main challenge for
solving such a problem thus resides in limiting the number of tests to
perform. For that purpose, previous works~\cite{veriflow,hsa} propose to
consider sets of headers that match some predicate filters and do not match some
others. Defining two headers as equivalent when they match exactly the same
predicate filters, it then suffices to perform one test per equivalence
class. These classes are indeed the atoms (the minimal non-empty sets) of the
field of sets (the (finite) $\sigma$-algebra) generated by the sets associated
to the rules. 
%This can be easily seen by noting that the class of headers
%matching a given subset $R$ of rules is the intersection of the sets associated
%to rules in $R$
%intersected with the intersection of
%and the complements of the sets associated to rules not in $R$.

A first challenge lies in efficiently identifying and representing these
atoms. This would be fairly easy if both intersection and complement could be
represented efficiently.
In practice, most classical compact data-structures for sets of bit strings are closed under intersection but not under complement.
For example, the intersection of two wildcard expressions, if not empty,  can obviously be represented by a wildcard expression, but the complement
of a wildcard expression is more problematic.
Previous works overcome this difficulty by
representing the complement of a $\ell$-letter wildcard expression as the union of
several wildcard expressions (up to $\ell$).  
However, this can result in exponential blow-up
and the tractability of these methods rely on various heuristics that do not
offer rigorously proven guarantees.

A second challenge lies in understanding the tractability of practical
networks. One can easily design a collection of $2\ell$ wildcard
expressions that generates all the $2^\ell$ possible singletons as atoms
(all the $\ell$-letters strings with only one non-$*$ letter).
What does prevent such phenomenon in practice?
%However, 
Can we provide a property that intuitively fits with practical network and
guarantees  that the number of atoms does not blow up? 
%Can we efficiently compute the atoms of the $\sigma$-algebra generated
%by a collection of sets (and benefit from such a property)?
This paper aims at addressing both challenges with provable guarantees.

\paragraph*{Related work}
The interest for network problem diagnosis has recently grown after
the advent of Software-Defined Networking (SDN)~\cite{Peresini:2013:OCP:2491185.2491205,Kotronis:2012:ORC:2390231.2390241,Monsanto:2013:CSN:2482626.2482629,Katta:2013:ICU:2491185.2491191}. SDN offers the opportunity to manage the forwarding tables of a network with a centralized controller where full knowledge is available for analysis. 
%% It also raises new challenges
%% by standardizing the use of arbitrary wildcard expressions in forwarding rules.
%% Several tools have been proposed to test various mis-behaviors such
%% forwarding loops or black-holes (incorrect drop of packets), or to answer
%% specific tests such as reachability between two nodes. Similar techniques
%% are generally used for all tests and loop detection appears as an emblematic
%% and representative problem.
%
Previous works has led to a series of methods for 
network analysis, resulting in several 
tools~\cite{veriflow,netplumber,anteater,libra}.  
%Most of them mainly
The main approaches rely on computing classes of headers 
%that match some rules and not some others 
by combining rule predicate filters using intersection and set difference
(that is intersection with complement). 
%While these methods appear to be effective in practice, their performance
%cannot be guaranteed as the representation of a single class can explode
%exponentially when repeatedly using set difference operations.
%The main difference of our
%approach is to avoid complement or set difference operations and rely solely
%on intersection and cardinal operations to enable performance guarantees.
The idea of considering all header classes generated by the global collection
of the sets associated to all forwarding rules in the network is due to
Veriflow~\cite{veriflow}. However, the use of set differences results in
computing a refined partition of the atoms of the field of sets generated by
this collection that can be much larger than an exact representation. 
NetPlumber~\cite{netplumber}, which relies on the header space analysis
introduced in~\cite{hsa}, refines this approach by considering the set of
headers that can follow a given path of the network topology. This set is
represented as a union of classes that match some rules (those that indicate to
forward along the path) and not some others (those that have higher priority
and deviate from the path): a similar problem of atom representation thus
arises.  The idea of avoiding complement operations is somehow approached in
the optimization called ``lazy subtraction'' that consists in delaying as much
as possible the computation of set differences. However, when a loop is
detected, formal expressions with set differences have to be tested for
emptiness. They are then actually developed, possibly resulting in the
manipulation of expressions with exponentially many terms.
%This approach of
%refining classes according to the topology allows to manage rules with write
%actions that can modify the packet header.  

Concerning the tractability of the problem, the authors of
NetPlumber observe a phenomenon
called ``linear fragmentation''~\cite{hsa,hsa-report} that allows to argue for the
efficiency of the method. They introduce a parameter $c$ measuring this linear
fragmentation and claim a polynomial time bound for loop detection for low
$c$~\cite{hsa} (when emptiness tests are not included in the analysis).
However, the rigorous analysis provided in~\cite{hsa-report}
includes a $c^{D_G}$ factor where $D_G$ is the diameter of the network
graph. While this factor appears to be largely overestimated in practice,
the sole hypothesis of linear fragmentation does not suffice for
explaining tractability and prove polynomial time guarantees.
The alternative approach of
Veriflow is specifically optimized for rules resulting from range matching
within each field of the header. When the number of fields is constant,
polynomial time execution can be guaranteed but this result does not
extend to general wildcard matching.

%% Previous work
%% observe a phenomenon called ``linear fragmentation''
%% and argue that the number of header classes tends to be polynomial in
%% practice~\cite{hsa}. However, an exponential term shows up in the 
%% analysis~\cite{hsa-report} and this model does not allow to explain
%% rigorously how tractability can arise.

%% Anteater~\cite{anteater} represents verification tasks as boolean
%% satisfiability problems and solve them using a generic SAT solver.  The boolean
%% formulas generated by this approach equivalently represent combinations of
%% rules with intersection and set difference.  Libra~\cite{libra} assumes that
%% rules are based on prefix matching and is concerned by solving the verification
%% tasks in a high performance perspective using MapReduce framework. One can show
%% that the number of header classes is linear in the number of rules when they
%% all are prefixes. The context of prefix matching rules is thus significantly
%% simpler.

A similar
% but simpler problem of header space analysis 
problem consists
in conflict detection between rules and their 
resolution~\cite{filterConflict,packetFilter,boutier}.
%% These problems are easier as they can be solved by  the sole use of
%% intersection operations.
%% in filter conflict detection and resolution.
%% Filter conflict detection ask to
%% determine if two rules with highest priority might indicate different actions
%% for a given header. Filter conflict resolution consists in adding 
%% more specific rules (with higher priority) for masking all conflicts.
It has mainly been studied in the context of multi-range
rules~\cite{filterConflict,packetFilter}, which can benefit from computational
geometry algorithms. 
(A multi-range can be seen as a hyperrectangle
%box with faces orthogonal to axis' 
in a $d$-dimensional euclidean space where $d$
is the number of fields composing headers.)
%In this special setting, it is
%possible to benefit form classical computational geometry algorithms such as
%intersection of orthogonal objects~\cite{intersectionBoxes} for retrieving
%efficiently the set of rules that intersect a given rule.  
Another similar problem, determining efficiently the rule that applies
to a given packet, has been %extensively 
studied for 
multi-ranges~\cite{packetFilter,packetClassif,packetClassifOverview}.
%
%Arbitrary wildcards of length $\ell$ can be seen as multi-ranges in a
%$\ell$-dimensional space by substituting the ranges $[0,0],[1,1],[0,1]$ to the
%letters $0,1,*$ respectively. However, computational geometry techniques are
%not competitive for such high dimension. The intersection problem that consists
%in finding the wildcard expressions that intersect a given wildcard expression
%(according to their associated sets) is related to the old problem of partial 
%matching~\cite{partialMatchRivest}. 
In the case of wildcard matching, such problems are related to the old problem of partial matching~\cite{partialMatchRivest}.
% which is equivalent to that of finding if one of the
% rules of a forwarding table matches a given header. 
It is believed to suffer from the ``curse of
dimensionality''~\cite{highDimNN,patrascu11} and no method significantly faster
than exhaustive search is expected to be found with near linear memory
(although some tradeoffs are known for small number of $*$
letters~\cite{newPartialMatch}). 
However, efficient hardware-based implementations
exist based on Ternary Content Addressable Memory
(TCAMs)~\cite{Bosshart2013sdn} or Graphics Processing Unit (GPU)~\cite{sdngpu}.
%Efficient hardware implementations are obtained using Ternary Content Addressable Memory (TCAMs) or even more advanced techniques~\cite{Bosshart2013sdn}.
%From a more practical perspective, wild-card matching or programmable match-action processing in high-speed network devices is usually implemented in hardware~\cite{Bosshart2013sdn}. 
%High-speed software-based lookup algorithms can also be obtained by using Graphics Processing Unit (GPU) capabilities~\cite{sdngpu}. 
%They enable full matching programmability and allow to store a larger number of rules that their hardware counterpart, but they still do not match the  maximum throughput performance of hardware-based designs.  
%For near linear space, In the
%cell-probe model, fast query time can only be attained at the cost high memory 
%usage~\cite{patrascuUnify}.
%\lv{ajouter ref TCAMs pr solutions hardware}

Regarding the manipulation of set collections, recent work~\cite{MaryS16}
shows how to enumerate all sets obtained by closure under given set operations
with polynomial delay.  In particular, this allows to produce the field of sets
generated by the collection. However, this setting requires a set
representation which explicitly lists all elements and does not apply here.
Another issue comes from the fact that the field set can be exponentially
larger than its number of atoms.

\paragraph*{Our contributions}
First, we make a key algorithmic step by providing an efficient algorithm for
computing an exact representation of the atoms of the field of sets
generated by a collection of sets. The representation obtained is linear in the
number of atoms and allows to test efficiently if an atom is included in a
given set of the collection. The main idea is to represent an atom by 
the intersection of the sets that contain it. We avoid complement computations
by using cardinality computations for testing emptiness.
Our algorithm is generic and supports
any data-structure for representing sets of $\ell$-bit strings that supports
intersection and cardinality computation in bounded time $O(T_\ell)$ for some
value $T_\ell$. It runs in polynomial time with respect to $n,m$, the number
of sets and atoms respectively.
Beyond combinations of wildcard and range expressions, we believe that it
could be extended to support expressions on hashed values (for a fixed hash
function) or bloom filters by defining auxiliary cardinality
measures. 
%% This algorithm allows to solve forwarding loop detection in $O(\ell
%% nm^2+nn_Gm)$ time where $n$ denotes the number of rules, $m$ the number of
%% header classes, and $n_G$ the number of nodes in the network.  The problem is
%% thus polynomial with respect to $n,m$.  

\begin{table}
  \begin{center}
    \scalebox{0.84}{
      \bgroup\small
      \begin{tabular}{|l|c|c|c|l|}
        \hline
        Rule repr.
        %& Header cl.
        & Trivial
        & NetPlumber \cite{netplumber}  & Veriflow \cite{veriflow} 
        & This paper \\ %(Th.~\ref{th:main})\\
        \hline
        $T_\ell$-bounded %& $m$ 
          & $O(T_\ell n n_G 2^\ell)$ &
          -- & -- & $O( T_\ell n m^2  + n n_G m)$ \\
        '' ov. deg. $O(\log n)$ %& $m=O(n^{k})$ 
          & '' & -- & -- 
          %& $O((T_\ell n + n_G) m) = O((T_\ell n + n_G) n^{k})$
          %& $O( (T_\ell (n + k2^{k}\log m)  + k n_G) m) $
          & $T_\ell n^{O(1)} m + O(n_G m \log n)$\\
        '' ov. deg. $k$ %& $m=O(n^{k})$ 
          & '' & -- & -- 
          %& $O((T_\ell n + n_G) m) = O((T_\ell n + n_G) n^{k})$
          %& $O( (T_\ell (n + k2^{k}\log m)  + k n_G) m) $
          & $O(( T_\ell n  + T_\ell k^22^{k}\log n + k n_G) n^{k})$\\
        \hline
        $\ell$-wildcard %& $m\le 2^{\min(\ell,n)}$ 
          & $O(\ell n n_G 2^\ell)$ & $\Omega(\ell n_G 2^{\min(\ell,n)})$ 
          & $\Omega(n_G2^{\min(\ell/2,n)})$ & $O(\ell n m^2 + n n_G m)$ \\
        '' ov. deg. $k$ %& $m=O(n^{k})$ 
          & '' & $\Omega(\ell n_G 2^{\min(\ell,n)})$ 
          & $\Omega(n_G2^{\min(\ell/2,n)})$  
          %& $O((\ell n  + n_G) m) = O((\ell n  + n_G) n^{k})$ \\
          & $O((\ell n  + \ell k 2^{k} + k n_G) n^{k})$ \\
        \hline
        $d$-multi-rng. %& $m=O((2n)^d)$ 
        & $O( \ell n n_G (2n)^d)$ &
          -- & $\Omega(\paren{\frac{n}{d}}^{d-1} n_G \frac{m}{d})$ 
          & $O(d n m^2 + n n_G m)$ \\ % d au lieu de \ell : constant size field
        '' ov. deg. $k$ %& $m=O(n^{k})$ 
          & '' & -- & $\Omega(\paren{\frac{n}{d}}^{d-1} n_G \frac{m}{d})$ 
        %$\Omega(\frac{2^d}{d} n_G m)$ 
          %& $O( (\log ^d n + n_G) m) = O( (\log ^d n + n_G) n^{k})$ \\
          & $O((\ell k^{d+1}\log^d n  + \ell k 2^{k} +  k n_G) n^{k})$ \\
        \hline
      \end{tabular}
      \egroup
    }
    \vspace{1mm}
    \caption{Worst-case complexity of forwarding loop detection with $n$ rules
      that generate $m$ header classes in an $n_G$-node network, for various
      rule set representations: $T_\ell$-bounded for intersection and
      cardinality computations in $O(T_\ell)$ time; $\ell$-wildcard for
      wildcard expressions with $\ell$ letters; $d$-multi-rng. for multi-ranges
      in dimension $d$ (with $\ell=O(d)$). Additional hypothesis
      ``ov. deg. $k$'' stands for overlapping degree of rule sets 
      bounded by $k$.
      %Main parameters are $n,m,n_G$ (number of rules, headers classes and nodes resp.).
      %% The ``Any'' row stands for any data-structure
      %% enabling two set intersection, set cardinal computation, and membership
      %% testing in $O(f(\ell)$ time for some function $f$.
      %% The parameters are the length $\ell$ of headers in
      %% bits, the number $n$ of distinct rules, the number $m\le 2^{\min(\ell,n)}$
      %% of header classes, the maximum overlapping degree $k\le n$ of the rules,
      %% the average overlapping degree $\k\le k$ of the rules, the average
      %% overlapping degree $\K\le \min(2^k,m)$ of the characteristic sets,
      %% the number $n_G$ of nodes, and the average number $\delta\le n$ of rules
      %% per forwarding table. 
      %% When rules are represented with arbitrary
      %% compact set data-structure, $f(\ell)$ denotes a time upper-bound for two
      %% sets intersection, cardinal computation, and membership testing.  In the
      %% case of multi-range matching, the number of fields in headers is denoted by
      %% $d$.
    }
    \label{tab:comp}
  \end{center}
\vspace{-10mm}
\end{table}

Second, we provide a dimension parameter, the \emph{overlapping degree} $k$,
that captures the complexity of a collection of rule sets considered in a
forwarding network. It is defined as the maximum number of distinct rules
(i.e. with pairwise distinct associated sets) that
match a given header. This parameter constitutes a measure of complexity
for the field of sets generated by a given collection of sets. In
the context of practical hierarchical networks, we have the following 
intuitive reason to
believe that this parameter is low:
in such networks, more specific rules are used at lower levels of the
hierarchy. We can thus expect that the overlapping degree is
bounded by the number of layers of the hierarchy.
Empirically, we observed a value within
$5-15$ for datasets with hundreds to thousands of distinct multi-field rules,
and $k=8$ for the collection of IPv4 prefixes advertised in BGP. 
% (at time of writing) and most header classes (96~\%) are
%contained by three prefixes at most. Morever, we believe that
%previous observations of ``linear fragmentation''~\cite{hsa,hsa-report} could
%result mainly from low overlapping degree. 
%
A constant overlapping
degree implies that the number of header classes is polynomially bounded,
giving a hint on why practical networks are tractable despite the
NP-completeness of the problem. 
In addition, the algorithm we propose is tailored to take advantage of low
overlapping degree $k$, even without knowledge of $k$. 
%% For networks with constant overlapping degree, loop
%% detection takes $O(nm+n_Gm)$ time. This is indeed polynomial with respect to
%% the size of the input for constant $k$ as the number $m$ of header classes is
%% then $n^k$ at most.  (However, it is not fixed-parameter tractable for
%% parameter $k$.)  
Table~\ref{tab:comp} provides a summary of the complexity
results obtained for loop detection depending on how the sets associated to
rules are represented. 
%% For comparison, it also contains lower-bounds for
%% NetPlumber and Veriflow, and the complexity of the trivial brute force
%% algorithm that consists in testing all possible headers. 
%% Note that with $\ell$-bit headers and $n$
%% distinct rules, the maximum number $m$ of header classes is $2^{\min(\ell,n)}$
%% (consider for example the $\ell$ wildcard rules
%% $1*^{\ell-1},*1*^{\ell-2},\ldots$).
%% In the case of
%% multi-range rules, a simple optimization of brute force consists in testing
%% only all possible combinations of the range bounds observed in the rules.
%% Note that the number of header classes is then $(2n)^d$ at most.
%% We
%% refine our complexity bounds with respect to two more parameters $\k$
%% and $\K$ that are both constant when $k$ is constant. The average overlapping
%% degree $\average{k}$ is obtained by averaging over all classes the number of
%% distinct rules that match headers of the class. We similarly define the average
%% overlapping degree $\K$ of combinations (see
%% Section~\ref{sec:overlappingDeg} for more details). We obviously have
%% $\average{k}\le k$ and parameter $\K$ always satisfy $\K\le 2^k$. In the
%% measurements we have made, $\average{k}$ and $\K$ are respectively in the range
%% $2-5$ and $5-15$.
Our techniques can be extended to handle write actions (partial writes of fixed
values such as field replacement) while maintaining polynomial guarantees.

Our algorithm for atom computation solves two difficult technical issues.
First, it manages to remain polynomial in the number $m$ of atoms even
though the number of sets generated by intersection solely can be 
exponential in $m$ with general rules.
Second, the use of cardinality computations allows
to avoid exponential blow-up (in contrast with previous work) but naturally
induces a quadratic $O(m^2)$ term in the complexity. However, we manage
to reduce it to $O(m)$ in the case of collections with logarithmic
overlapping degree (i.e. $k=O(\log n)$). 
Indeed, our algorithm then becomes linear in $m$ and polynomial in $n$, 
providing an algorithmic breakthrough towards efficient atom enumeration.

\textbf{Roadmap:}
Section~\ref{sec:model} introduces the model. Section~\ref{sec:atoms}
describes how to represent the atoms of a field of sets.
We state in Section~\ref{sec:algo}
our main result concerning atom computation and its implications for forwarding
loop detection
that give the upper bounds listed in Table~\ref{tab:comp}.
Section~\ref{sec:comparison}
gives more insight about the comparison of our results with previous works
and justifies the lower bounds presented in Table~\ref{tab:comp}.
Finally, Section~\ref{sec:persp} discusses some perspectives.

\section{Model}\label{sec:model} % -------------------------------------

\subsection{Problem}
We consider a general model of network where a \emph{network instance} $\mathcal{N}$ is characterized by:
\begin{itemize}
\item a graph $ G=(V,E) $ where each node $u \in V$ is a \emph{router} and has a \emph{forwarding table} $T(u)$.
\item a natural number $\ell$ representing the (fixed) bit-length of packet headers. 
%that we consider to be fixed. Thus the set of all headers is finite. It is denoted by $H$.
\end{itemize}
Let $H$ denote the set of all $2^\ell$ possible headers (all $\ell$-bit
strings).  Each forwarding table $T(u)$ is an ordered list of forwarding rules
$(r_1,a_1),\ldots,(r_p,a_p)$.  Each \emph{forwarding rule} $(r,a)$, is made of a
predicate filter $r$ and an action $a$ to apply on any packet whose header matches the predicate. We say that 
a header $h$ \emph{matches} rule $(r,a)$ when it matches predicate $r$
(we may equivalently say that $(r,a)$ matches $h$).
We then write $h\in r$ to emphasize the fact that $r$ can be viewed
as a compact data-structure encoding the set of headers that match it.
This set is called the \emph{rule set} associated to $(r,a)$.
For the ease of notation, we thus let $r$ denote both the 
predicated filter of rule $(r,a)$ and the associated set.
%% associated
%% to the rule
%% both denote the predicate
%% filter of the rule and the set of headers it represents.
%% We write $h\in r$ when a header $h$ \emph{matches} rule $(r,a)$.
%Note that $r.s$ is represented by the predicate filter associated to $r$.
% (we may also say that $r$ matches $h$).

We consider three possible actions for a packet: forward to a neighbor, drop, or deliver (when the packet has reached destination).  The
priority of rules is given by their ordering: when a packet with header
$h$ arrives at node $u$, the first rule matched by $h$ is
applied. Equivalently, the rule $(r_i,a_i)$ is applied when $h\in r_i\cap
\overline{r_1} \cap \cdots \cap \overline{r_{i-1}}$, where $\overline{r}$
denotes the complement of $r$.  When no match is found (i.e. $h\in
\overline{r_1} \cap \cdots \cap \overline{r_p}$), the packet is dropped.

Given a header $h$, the \emph{forwarding graph} $G_h=(V,E_h)$ of $h$ represents
the forwarding actions taken on a packet with header $h$: $uv\in E_h$ when the first rule that matches $h$ in $T(u)$ indicates to forward to $v$. The \emph{forwarding
loop detection} problem consists in deciding whether there exists a 
header $h\in H$ such that $G_h$ has a directed cycle.

Note that we make the simplifying assumption that the input port of 
an incoming packet is not taken into account in the forwarding decision of
a node. In a more general setting, a node has a forwarding table for each
incoming link. This is essentially the same model except that we consider the
line-graph of $G$ instead of $G$.

\subsection{Header Classes}\label{sec:hc}
A natural relation of equivalency exists between
headers with respect to rules: two headers are equivalent if they match exactly
the same rules, that is if they belong to the same rule sets.
Trivially, two equivalent headers let the corresponding packets
have exactly the same behavior in the network.
% (the reverse is not true,
%i.e. two headers following the same path in the network may not match
%similarly other rules not considered along the path). 
The resulting equivalence
classes partitions the header set $H$ into nonempty disjoint subsets called
\emph{header classes}. To check any property of the
network, it  suffices to do it on a class-by-class basis instead of a
header-by-header basis. The number $m$ of header classes is thus a
natural parameter when considering the difficulty of forwarding loop 
detection (or other similar network analysis problems).
A more accurate definition of classes could take into account the order
of rules and the topology (see Appendix~E) but this does not allows us
to obtain complexity gain in general.

The header classes can be defined according to the collection $\R$
of rule sets of $\mathcal{N}$ (i.e. $\R=\set{r\mid \exists u,a \mbox{ s.t. } (r,a)\in T(u)}$).
%, regardless of associated actions and network topology.
%(We call \emph{collection} a set of set).  
If $\R(h)\subseteq \R$ denotes the set of all rule sets associated to the rules matched by a given header $h$, then its header class is clearly equal to $\paren{\cap_{r\in \R(h)}r} \cap
\paren{\cap_{r \in \R \setminus \R(h)}\overline{r}}$ (with the convention $ \cap_{r\in \emptyset} r = H $).
% Note that we use the convention $ \cap_{r\in \emptyset} r = H $. 
%% Regardless of the topology of the network and actions associated to rules, 
%% we consider the whole collection of
%% rule sets denoted by $\R=\cup_{u \in V} T_s(u)=\{r_1, r_2,\ldots,r_n\}$
%% where $T_s(u)$ denotes the rule sets associated to the rules in $T(u)$.
Such sets are the atoms of the field of sets generated by $\R$.
Their computation is the main topic of this paper and is detailed in the
next section. 

% It is possible to refine the notion of header class to take into account
% the order of rules and the topology of the network as developed
% in Appendix~E. However, we still need to compute the atoms of the
% collection of rule sets encountered along a path for emptiness tests.
% (Avoiding such tests can result in exploring an exponential number of paths
% as shown in Appendix~B.2.)

\subsection{Set representation}\label{sec:repr}

As we focus on the collection $\R$ of rule sets, we now detail our hypothesis
on their representation.  We assume that a data-structure $\D$ allows to
represent some of the subsets of a space $H$.  For the ease of notation, $\D\subseteq \mathcal{P}(H)$ also denotes the collection of subsets that can
be represented with $\D$. We assume that $ \D $ is closed under intersection: if $s$ and $ s' $ are in $ \D $, so is $ s\cap s' $.
 We say that such a data-structure $\D$ for subsets
of $H$ is \emph{$T_H$-bounded} when intersection and cardinality can be computed in
time $T_H$ at most: given the representation of $s,s'\in\D$,
the representation of $s\cap s'\in\D$ and the size $\card{s}$ of $s$ (as a
binary big integer) can be computed within time $T_H$. As big integers computed within time $T_H$ have $O(T_H)$ bits, this implies $\card{H} = 2^{O(T_H)}$: the
bound $T_H$ obviously depends on $H$.  Intersection, inclusion test
($s\subseteq s'$), cardinality computation ($\card{s}$) 
and cardinality operations
(addition, subtraction and comparison) are called \emph{elementary set
  operations}. Under the $T_H$-bounded hypothesis, all these
operations can be performed in $O(T_H)$ time ($s\subseteq s'$ is equivalent to
$\card{s\cap s'}=\card{s}$).
% and the simple big integer operations considered can
%be performed in linear time).

Two typical examples of data-structures meeting the above requirements are
wildcard expressions and multi-ranges.  In a forwarding network, we consider
the header space $H=\set{0,1}^\ell$ of all $\ell$-bit strings, which may be
decomposed in several fields. A rule set is typically
represented by a wildcard expression or a range of integers. In both cases,
they can be represented within $2\ell$ bits and both representation are
$O(\ell)$-bounded.  We call \emph{$\ell$-wildcard} a string $e_1\cdots
e_\ell\in\set{0,1,*}^\ell$.  It represents the set $\{x_1\cdots
x_\ell\in\set{0,1}^\ell \mid \forall i, x_i=e_i \mbox{ or } e_i=*\}$. If rules
are decomposed into fields, any combinations of wildcard expressions and ranges
can be used (either one for each field) and represented within $O(\ell)$
bits. However cardinality computations can take $\Theta(\ell\log \ell)$ time as
multiplications of big integers are required.  Such representation is thus
$O(\ell\log \ell)$-bounded.  Given $d$ field lengths $\ell_1,\ldots,\ell_d$
with sum $\ell$, we call \emph{$(d,\ell)$-multi-range} a cartesian product
$[a_1,b_1]\times\cdots\times [a_d,b_d]$ of $d$ integer ranges with $0\le a_i\le
b_i<2^{\ell_i}$ for $i$ in $1..d$. It represents the set
$\set{bin(x_1,\ell_1)\cdots bin(x_d,\ell_d) \mid (x_1,\ldots,x_d)\in
  [a_1,b_1]\times\cdots\times [a_d,b_d]}$ where $bin(x_i,\ell_i)$ is the binary
representation of $x_i$ within $\ell_i$ bits.

When manipulating a collection of $p$ sets in $\D$, we assume
that their representations are stored in a balanced binary search tree,
allowing to dynamically add, remove or test membership of a set 
in $O(T_H\log p)$ time. More efficient data-structures
(tries and segment trees) can be used for
wildcard expressions and multi-ranges as detailed in Appendix~A.1.
% App. A.1 comes here

\section{Atoms and combinations generated by a collection of sets}
\label{sec:atoms}

Given a space of elements $H$, a \emph{collection} is a finite set of
subsets of $H$. The \emph{field of sets}
$\sigma(\R)$ generated by a collection $\R$ is the (finite) $\sigma$-algebra generated by $\R$, that is the smallest collection closed under intersection, union and complement that contains $\R\cup\set{\emptyset,H}$.

The \emph{atoms} of $\sigma(\R)$ are classically defined as the non-empty elements
that are minimal for inclusion.  For brevity, we call them the atoms generated by $\R$. Let $\A(\R)$ denote their collection.
%for the atoms of the field of sets generated by $\R$.
Note that for $a\in\A(\R)$ and $r\in\R$, we have either
$a\subseteq r$ or $a\subseteq \overline{r}$ (otherwise $a\cap r$ and
$a\cap\overline{r}$ would be non-empty elements of $\sigma(\R)$ strictly
included in $a$).  This gives a characterization of the
atoms that  matches our definition of header classes when $\R$ is the
collection of rule sets of a network (see \Cref{sec:hc}):
\begin{equation}
\label{eq:def_atoms}
\A(\R)=\set{a\not=\emptyset \mid \exists R\subseteq \R, 
	a = \paren{\cap_{r\in R}r} \cap \paren{\cap_{r \in \R \setminus R}\overline{r}}}\text{.}
\end{equation}
%Note that we rely on the convention $ \cap_{r\in \emptyset} r = H $.
%

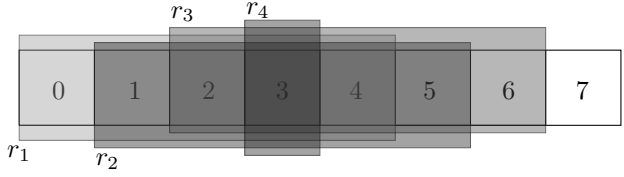
\begin{figure}
	\centering
	\begin{tikzpicture}
	\definecolor{dark}{gray}{0.1}
	\definecolor{yellow}{gray}{0.6}
	\definecolor{violet}{gray}{0.3}
	\definecolor{orange}{gray}{0.1}

	\draw (0,0) -- (0,1)-- (8,1)-- (8,0) -- (0,0);
	\draw (1,1) -- (1,0);
	\draw (2,1) -- (2,0);
	\draw (3,1) -- (3,0);
	\draw (4,1) -- (4,0);
	\draw (5,1) -- (5,0);
	\draw (6,1) -- (6,0);
	\draw (7,1) -- (7,0);
	
	\draw (0,0) -- (7,0)
	node[pos=0.075, yshift=13]{0} node[pos=0.22, yshift=13]{1} node[pos=0.36, yshift=13]{2} node[pos=0.5, yshift=13]{3} node[pos=0.64, yshift=13]{4} node[pos=0.78, yshift=13]{5} node[pos=0.93, yshift=13]{6} node[pos=1.07, yshift=13]{7};
	
	\draw[fill=yellow, opacity=0.4] (0,-0.2) -- (0,1.2)-- (5,1.2)-- (5,-0.2) -- (0,-0.2)
	node[yshift=-5, opacity = 1]{$r_1$};
	
	\draw[fill=orange, opacity=0.4] (1,-0.3) -- (1,1.1)-- (6,1.1)-- (6,-0.3) -- (1,-0.3)
	node[yshift=-5, xshift=5, opacity = 1]{$r_2$};
	
	\draw[fill=violet, opacity=0.4] (2,-0.1) -- (2,1.3)-- (7,1.3)-- (7,-0.1) -- (2,-0.1)
	node[yshift=45, xshift=5, opacity = 1]{$r_3$};
	
	\draw[fill=dark, opacity=0.4] (3,-0.4) -- (3,1.4)-- (4,1.4)-- (4,-0.4) -- (3,-0.4)
	node[yshift=55, xshift=5, opacity = 1]{$r_4$};
	
	\end{tikzpicture}
	\caption{Toy example of an 8-elements space $ H $ with a collection $ \R = \{r_1, r_2, r_3, r_4\} $. \label{fig:example}}
\vspace{-.5cm}
\end{figure}

%For example, the collection $ \R $ pictured in \Cref{fig:example} generates 6 atoms: 
%$ \{0\} = r_1 \cap \bar{r_2} \cap \bar{r_3}$, 
%$ \{1\} = r_1 \cap \bar{r_2} \cap r_3$, 
%$ \{2,3\} = r_1 \cap r_2 \cap r_3$, 
%$ \{4\} = \bar{r_1} \cap r_2 \cap r_3$,
%$ \{5\} = \bar{r_1} \cap r_2 \cap \bar{r_3}$, and
%$ \{6, 7\} = \bar{r_1} \cap \bar{r_2} \cap \bar{r_3}$.

%  $\A(\R)=\set{\{0\}, \{1\}, \{2,3\}, \{4\}, \{5\}, \{6,7\}}$.

For example, the collection $ \R $ pictured in \Cref{fig:example} generates 7 atoms: 
$ \{0\} = r_1 \cap \bar{r_2} \cap \bar{r_3} \cap \bar{r_4}$, 
$ \{1\} = r_1 \cap r_2 \cap \bar{r_3} \cap \bar{r_4}$, 
$ \{2,4\} = r_1 \cap r_2 \cap r_3 \cap \bar{r_4}$, 
$ \{3\} = r_1 \cap r_2 \cap r_3 \cap r_4$, 
$ \{5\} = \bar{r_1} \cap r_2 \cap r_3 \cap \bar{r_4}$, 
$ \{6\} = \bar{r_1} \cap \bar{r_2} \cap r_3 \cap \bar{r_4}$, and
$ \{7\} = \bar{r_1} \cap \bar{r_2} \cap \bar{r_3} \cap \bar{r_4}$.

Due to the \emph{complement} operations, the atoms can be harder to represent than the rules they are generated from. In \Cref{fig:example}, all rules are ranges, but the atom $ \{2,4\} $ is not.
When the \emph{intersection} operation can be computed and represented efficiently (see \Cref{sec:repr}), it is natural to consider the collection $\C(\R)\subseteq \sigma(\R)$ of \emph{combinations} defined as sets that can obtain by intersection from sets in $\R$:
\begin{equation}
\C(\R):=\set{c\not=\emptyset \mid \exists R\subseteq \R, 
	c = \cap_{r\in R}r}\text{.}
\end{equation}

In \Cref{fig:example}, there are 8 combinations: $ r_1 $, $ r_1\cap r_2 $, $ r_1\cap r_3 $, $ r_4 $, $ r_2 \cap r_3 $, $ r_3 $, $ H $ and $ r_2 $.

%  $ H $, , $ r_2 $, $ r_3 $, $ r_4 $, ,  and $ r_2\cap r_3 $.
%
% generates 7 atoms: 
%$ \{0\} = r_1 \cap \bar{r_2} \cap \bar{r_3} \cap \bar{r_4}$, 
%$ \{1\} = r_1 \cap r_2 \cap \bar{r_3} \cap \bar{r_4}$, 
%$ \{2,4\} = r_1 \cap r_2 \cap r_3 \cap \bar{r_4}$, 
%$ \{3\} = r_1 \cap r_2 \cap r_3 \cap r_4$, 
%$ \{5\} = \bar{r_1} \cap r_2 \cap r_3 \cap \bar{r_4}$, 
%$ \{6\} = \bar{r_1} \cap \bar{r_2} \cap r_3 \cap \bar{r_4}$, and
%$ \{7\} = \bar{r_1} \cap \bar{r_2} \cap \bar{r_3} \cap \bar{r_4}$.
%

%Before going further, we need a few additional definitions.
 Given a collection $\R$ and a subset $ s \in H $,  
we let $\R(s):=\set{r\in\R\mid s\subseteq r}$ denote 
the $ containers $ of $ s $, that is the sets in $\R$ that contain $s$. 
We associate  each combination $ c\in \C(\R) $ with the set 
$a(c):=c\cap\paren{\cap_{r\in\R\setminus\R(c)}\overline{r}}$.
The function $a(\cdot)$ (with parenthesis) 
should not be confused with an atom $a$ (without parenthesis).
A combination $ c $ is said to be \emph{covered} 
if $ a(c)= \emptyset $ (the union of its non-containers covers it), 
otherwise it is \emph{uncovered}. 
Similarly, we associate  each atom $ a \in \A(\R) $ with the combination $ c(a):= \cap_{r\in \R(a)}r $ (this corresponds to the ``positive'' part of the characterization  from Equation~\ref{eq:def_atoms}).
The following proposition states that atoms
%, which we need to address our problem, 
can be represented by uncovered combinations.

\begin{proposition}\label{prop:repr}
The collection $\UC(\R):=\set{c\in\C(\R)\mid a(c)\not=\emptyset}$ 
of uncovered combinations is in one-to-one correspondence with the atom collection $\A(\R)$: each $c\in\UC(\R)$ corresponds to the atom $ a(c) $. 
Reciprocally, each atom $a$ is mapped to combination $ c(a) $.
\end{proposition}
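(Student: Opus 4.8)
The plan is to establish the bijection by showing that the two maps $c \mapsto a(c)$ on $\UC(\R)$ and $a \mapsto c(a)$ on $\A(\R)$ are mutually inverse. First I would verify that $a(c)$ is genuinely an atom whenever $c \in \UC(\R)$. Writing $R = \R(c)$, we have $c = \cap_{r \in R} r$, so $a(c) = \paren{\cap_{r\in R} r} \cap \paren{\cap_{r \in \R\setminus R}\overline{r}}$; by the characterization in Equation~\ref{eq:def_atoms}, this is an atom as soon as it is non-empty, which is exactly the defining property $a(c)\neq\emptyset$ of $\UC(\R)$. The one subtlety to note here is that for $r\in R$ we indeed have $c\subseteq r$ (so this is the ``positive'' side), while for $r\in\R\setminus R$ we need $c\not\subseteq r$ so that the set $R$ used in Equation~\ref{eq:def_atoms} is precisely the set of containers; but $R=\R(c)$ is by definition the set of all $r\in\R$ with $c\subseteq r$, so $r\notin R$ means $c\not\subseteq r$, consistent with $a(c)\subseteq\overline r$ only being possible — and in fact forced once $a(c)$ is a non-empty atom, since an atom is contained in $r$ or in $\overline r$.

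Next I would check that $c(a) \in \UC(\R)$ for every atom $a$. By definition $c(a) = \cap_{r\in\R(a)} r$ where $\R(a) = \set{r \in \R \mid a \subseteq r}$; since $a \subseteq r$ for all $r \in \R(a)$, we get $a \subseteq c(a)$, so in particular $c(a) \neq \emptyset$ and $c(a) \in \C(\R)$. To see $c(a)$ is uncovered, I would argue that $\R(c(a)) = \R(a)$: the inclusion $\R(c(a)) \subseteq \R(a)$ is immediate from $a \subseteq c(a)$, and the reverse holds because every $r \in \R(a)$ appears in the intersection defining $c(a)$, hence $c(a) \subseteq r$. Then $a(c(a)) = c(a) \cap \paren{\cap_{r \in \R\setminus\R(a)} \overline r}$, and since $a$ is an atom with $a\subseteq r$ exactly for $r\in\R(a)$, we have $a \subseteq \overline r$ for every $r \in \R\setminus\R(a)$; combined with $a \subseteq c(a)$ this gives $a \subseteq a(c(a))$, so $a(c(a)) \neq \emptyset$, i.e. $c(a) \in \UC(\R)$.

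It remains to show the maps compose to the identity in both directions. From the previous paragraph, $a(c(a)) \supseteq a$; but $a(c(a))$ is an atom (an element of $\A(\R)$ by the first paragraph, or just because it is a non-empty set from Equation~\ref{eq:def_atoms}) and atoms that contain a non-empty atom must equal it — or more directly, any two atoms are equal or disjoint, and both contain nothing outside themselves, so $a(c(a)) = a$. In the other direction, take $c \in \UC(\R)$ and set $a = a(c)$; I claim $c(a) = c$. We have $\R(a) \supseteq \R(c)$ since $a = a(c) \subseteq c \subseteq r$ for all $r\in\R(c)$; conversely, if $r \in \R(a)$, i.e. $a(c) \subseteq r$, then since $a(c)$ is a non-empty atom, $a(c)$ cannot be contained in $\overline r$, so $r$ does not appear among the non-containers — that is, $r \notin \R\setminus\R(c)$, hence $r \in \R(c)$. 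Therefore $\R(a) = \R(c)$ and $c(a) = \cap_{r\in\R(a)} r = \cap_{r\in\R(c)} r = c$.

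The one genuine obstacle, and the place to be careful, is the step $\R(a(c)) = \R(c)$ in the last paragraph: showing that $a(c)$ is not contained in any $r \notin \R(c)$. This is where the "$a(c)$ non-empty" hypothesis does real work — it is precisely because $a(c)$ is a non-empty atom that it lies entirely inside or entirely outside each $r$, and the non-containers were explicitly complemented in the definition of $a(c)$, pinning down $\R(a(c))$ exactly. Everything else is bookkeeping with the definitions of $\R(\cdot)$, $\C(\R)$, and Equation~\ref{eq:def_atoms}, together with the basic fact (already observed just before Equation~\ref{eq:def_atoms}) that every atom is contained in $r$ or in $\overline r$ for each $r\in\R$.
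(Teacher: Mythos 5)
Your proof is correct and takes essentially the same route as the paper's: you match $a(c)$ against the atom characterization of Equation~\ref{eq:def_atoms} with $R=\R(c)$, and you show $\R(c(a))=\R(a)$ so that $a(c(a))=a$ with $c(a)$ uncovered. The only difference is that you also spell out the reverse composition $c(a(c))=c$ via $\R(a(c))=\R(c)$, a point the paper leaves as a remark after its proof rather than inside it.
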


Based on \Cref{prop:repr}, we say that an uncovered combination $ c $ \emph{represents} atom $ a(c) $. $\UC(\R)$ can be seen as a canonical representation of $\A(\R)$ by a collection of combinations.
%
% in the sense that it is the unique collection of combinations that inclusion-wise
%represent all atoms generated by $\R$.
%
In Figure \Cref{fig:example}, atom $ \{2,4\} $ is represented by 
combination $ r_1\cap r_3 = r_1\cap r_2 \cap r_3$. 
Similarly, atoms
%\Cref{prop:repr} states that the atoms 
$ \{0\} $, $ \{1\} $, %$ \{2,4\} $, 
$ \{3\} $, $ \{5\} $, $ \{6\} $ and $ \{7\} $
are represented by (uncovered) combinations 
$ r_1 $, $ r_1\cap r_2 $, %$ r_1\cap r_3 $, 
$ r_4 $, $ r_2 \cap r_3 $, $ r_3 $, 
and $ H $ respectively.
Combination $ r_2 $ is covered: $ a(r_2) = \bar{r_1}\cap r_2 \cap \bar{r_3}
\cap \bar{r_4} = \emptyset $.

The proof of \Cref{prop:repr} is straightforward. First, we verify that if $ c $ is an uncovered combination, $ a(c) $ is an atom: it suffices to observe that $ c = \cap_{r\in \R(c)} r $ and to match $a(c)=c\cap\paren{\cap_{r\in\R\setminus\R(c)}\overline{r}}$ with the atom characterization given by Equation~\ref{eq:def_atoms} using $ R = \R(c) $. 
%Using the same characterization and observing that $ \R(a(c)) = \R(c) $, we get $ c(a(c)) = c$. 
Similarly, if $ a = \paren{\cap_{r\in R}r} \cap \paren{\cap_{r \in \R \setminus R}\overline{r}} $  is an atom for some $ R \subseteq \R $, the combination $ c(a) $ satisfies $ \R(c(a)) = R $ which implies $ a(c(a)) = a$. In particular, as $ a\neq \emptyset $, $ c(a) $ is uncovered.

%Note that thanks to the construction of the representative combinations (an
%uncovered combination corresponds to the ``positive'' part of the
A nice property of the characterization of \Cref{prop:repr} is that
it allows to efficiently test 
whether a set $r\in\R$ contains an atom $a\in\A(\R)$: 
given a combination $c$ that represents $a$, $a\subseteq r$ is equivalent to
$c\subseteq r$. This comes from the fact that every uncovered combination $c$
has same containers as $a(c)$. (If it was not the  case, $a(c)=\emptyset$
and $c$ is covered.)
This explains the importance of determining $ \UC(\R) $, and in
particular to separate covered combinations from uncovered ones.
This is the subject of the next section, but we first formally introduce
the notion of overlapping degree.

\paragraph*{Overlapping degree of a collection}
Our representation is naturally associated to the following measure 
of complexity of a collection $\R$. We define the \emph{overlapping degree} $k$
of $\R$ as the maximum number of containers of an element, that is
$k=\max_{h\in H}\card{\R(\set{h})}$. Note that all elements within an atom
have same containers in $\R$ and that a set cannot have more containers than
any of its elements. We thus have:
\[
k = \max_{s\subseteq H}\card{\R(s)} = \max_{a\in\A(\R)}\card{\R(a)}
\]
As any atom $a$ can be expressed as 
$a=\paren{\cap_{r\in R}r} \cap \paren{\cap_{r \in \R \setminus R}\overline{r}}$
where $R=\R(a)$ is  the set of containers of $a$, the number of atoms
is obviously bounded by $n \choose k$ where $n=\card{\R}$ denotes the number
of sets in $\R$. The overlapping degree of a collection thus measures 
its complexity in terms of number of atoms it may generate.

We similarly define the \emph{average overlapping degree} $\k$ as
the average number of containers of an atom: 
$\k = \frac{\sum_{a\in\A(\R)}\card{\R(a)}}{\card{\A(\R)}}$.
We obviously have $\k\le k$.
Given a collection $\R$, we will also consider the average overlapping degree $\K$
of combinations, that is the average overlapping degree of $\C(\R)$.
Note that $\C(\R)$ has the same
collection of atoms as $\R$ and that a combination $c=\cap_{r\in \R(c)}r$
containing an atom $a$ must satisfy $\R(c)\subseteq \R(a)$. The overlapping
degree of $\C(\R)$ is thus at most $2^{k}$ and we always have $\K\le 2^{k}$.

In the example from \Cref{fig:example}, one can verify that we have $ k = 4$, $ \k = 13/7 $ and $ \K = 4 $.

In real datasets we observe that both $k$ and $\K$ are in the range $[2,15]$
while $\k$ is in the range $[1.5,5]$
(these include Inria firewall rules, Stanford forwarding tables provided
by Kazemian et al.~\cite{hsarepo} and IPv4 prefixes announced at BGP level from
Route Views~\cite{routeviews}).

\section{Incremental computation of atoms}\label{sec:algo}

We can now state our main result concerning the computation of the atoms
generated by a collection of sets.
% (See Subsection~\ref{sec:repr} for the detail of the various set operations considered.)

%One can easily see that this bound is tight by considering
%$\R=\set{\overline{\set{1}},\ldots,\overline{\set{n}}}$ with $H=\set{1..n}$.

\begin{theorem}\label{th:main}
Given a space set $H$ and
a collection $\R$ of $n$ subsets of $H$, the collection $\UC(\R)$ 
of combinations that canonically 
%inclusion-wise 
represent the atoms $\A(\R)$
can be incrementally computed with $O(\min(n+k\K\log m, n m) m)$ 
elementary set operations where:
$m$ is the number of atoms generated by $\R$; $k$ is the overlapping
degree of $\R$; $\K$ is the average overlapping degree of $\C(\R)$; 
$\k$ is the average overlapping degree of $\R$.
Within this computation, each combination $c\in\UC(\R)$ can be associated
to the list $\R(c)$ of sets in $\R$ that contain $c$.
If sets are represented by $\ell$-wildcard expressions (resp.
$(d,\ell)$-multi-ranges), the representation can be computed in
$O(\ell \min(n+k\K, n m) m)$ 
(resp. $O(\ell \min(\k \log^d m + k\K, n m) m)$) 
time. 
%% More precisely, if the data-structures used for representing sets 
%% and collections of sets enable elementary set operations within time $T_{set}$,
%% $p$-collection operations within time $T_{coll}(p)$ and $p$-intersection
%% queries with overhead $(T_{inter}(p),T_{updt}(m))$, then the representation
%% of the atoms generated by $\R$ can be computed in 
%% $O(n T_{inter}(m)  + \average{k} m T_{updt}(m)  
%%   + \min(k\K,\average{k}m) m (T_{set} + T_{coll}(m)))$ time.
\end{theorem}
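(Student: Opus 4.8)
The plan is to build $\UC(\R)$ incrementally, processing the sets $r_1,\dots,r_n$ of $\R$ one at a time and maintaining, after step $i$, the collection $\UC(\R_i)$ of uncovered combinations for $\R_i:=\set{r_1,\dots,r_i}$, together with for each such combination $c$ its container list $\R_i(c)$ and its cardinality $\card{c}$. The key structural fact to establish first is how the atoms evolve when a new set $r=r_{i+1}$ is added: each atom $a$ of $\sigma(\R_i)$ either satisfies $a\subseteq r$, or $a\cap r=\emptyset$, or splits into the two nonempty pieces $a\cap r$ and $a\setminus r$. Translated to combinations via \Cref{prop:repr}, a combination $c\in\UC(\R_i)$ representing atom $a$ gives rise in $\UC(\R_{i+1})$ to: the combination $c$ itself (if $a\setminus r\neq\emptyset$, with unchanged container list), and the combination $c\cap r$ (if $a\cap r\neq\emptyset$, i.e. if $c\cap r\neq\emptyset$, with container list $\R_i(c)\cup\set{r}$). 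So the update is: for each current combination $c$, compute $c\cap r$; if it is empty, keep $c$; otherwise tentatively create the child $c\cap r$ and keep $c$ only if $a(c)\setminus r\neq\emptyset$. The crucial emptiness test $a(c)\setminus r\overset{?}{=}\emptyset$ is done \emph{without} complementation, using cardinalities: $a(c)$ splits iff $\card{a(c)\cap r}<\card{a(c)}$, and $\card{a(c)\cap r}$ equals the sum of $\card{a(c')}$ over the descendants $c'$ of $c$ that are contained in $r$ — but more directly, since after the update the new atoms partition the old ones, one recovers each $\card{a(c)}$ by the inclusion–exclusion/recursive relation among combinations ordered by their container sets. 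I would phrase this as: the combinations of $\C(\R)$ form a partial order under reverse inclusion of container sets, $\card{a(c)}=\card{c}-\sum_{c'\supsetneq_{\text{containers}}c}\card{a(c')}$, so all atom cardinalities are computable by a single pass over $\UC(\R)$ in $O(m^2)$ set operations, and the same bookkeeping is maintained incrementally.

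Second, I would nail down the two complexity bounds. For the $O(nm)$-per-step, hence $O(nm^2)$ elementary-set-operation bound: at each of the $n$ steps there are at most $m$ combinations, for each we do $O(1)$ intersections/cardinality operations to form the child and $O(1)$ work is needed for the splitting test once atom cardinalities are known, but recomputing the affected atom cardinalities costs $O(m)$ per step, giving $O(nm)$ per step. (I must be careful that combinations not touched by $r$ incur no cost beyond one intersection that returns $\emptyset$.) For the refined $O((n+k\K\log m)m)$ bound, the point is that the number of combinations is not $m$ but $\card{\C(\R)}\le m$ weighted differently — actually one should count work per \emph{atom}: a new set $r$ only splits atoms $a$ with $r\cap a\neq\emptyset$, i.e. atoms whose container set in $\R_{i+1}$ includes $r$; summed over all $n$ steps the total number of (atom, splitting set) incidences is $\sum_{a}\card{\R(a)}=\k m$, and propagating to combinations costs a factor $\K$ over $k$, i.e. $k\K m$ split events, each requiring $O(\log m)$ to locate combinations in the balanced search tree keyed by container set. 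The additive $nm$ accounts for the one cheap empty-intersection test per (set, surviving combination) pair. I would present this as two lemmas — a structural one (the atom-split recurrence and correctness of the cardinality-based emptiness test) and an accounting one (bounding the number of split events by $k\K m$).

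Third, for the $\ell$-wildcard and $(d,\ell)$-multi-range refinements, I would replace the generic $O(T_H\log p)$ dictionary by the specialized tries / segment trees of Appendix~A.1, so that the elementary set operations run in $O(\ell)$ time and, crucially, the "for each surviving combination, intersect with $r$" sweep is replaced by a traversal that visits only the combinations actually meeting $r$ (and their ancestors), absorbing the $\log m$ into the trie depth $O(\ell)$ for wildcards and into $O(\log^d m)$ for multi-ranges; multiplying the split-event count $k\K m$ and the base count ($nm$ for wildcards, $\k\log^d m\cdot m$ for multi-ranges via the segment-tree locate) by $O(\ell)$ then gives the stated time bounds. The main obstacle I anticipate is the accounting argument for the refined bound: showing rigorously that the recomputation of atom cardinalities after a split can be confined to the newly created combinations and their ancestor chains (so the $O(m)$-per-step naive cost is avoided), and that the total number of such chain-updates over the whole run is $O(k\K m\log m)$ rather than $O(nm\log m)$. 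This requires exploiting that an ancestor of a combination has a strictly smaller container set, so each combination has at most $k$ proper ancestors and at most $2^k$ is replaced by the sharper amortized $\K$; making this amortization precise — charging each unit of recomputation work to a unique (atom, container) incidence — is the delicate part, and everything else is bookkeeping on top of \Cref{prop:repr}.
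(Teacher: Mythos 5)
Your plan follows the paper's proof in all its essentials: insert the sets one at a time; use the fact that every uncovered combination of $\R_i\cup\set{r}$ is of the form $c$ or $c\cap r$ for $c\in\UC(\R_i)$ (the paper's Lemma~\ref{lem:addrule}); detect covered combinations without ever computing complements, via the disjoint-union recurrence $\card{a(d)}=\card{d}-\sum_{c\subsetneq d}\card{a(c)}$ evaluated in non-decreasing cardinality order (the paper's Lemma~\ref{lem:cardinals}); get $O(nm^2)$ set operations for the basic version; charge the refined version to (atom, containing-combination) incidences to reach $k\K m$; and swap the balanced BST for tries / segment trees to obtain the $\ell$-wildcard and multi-range bounds. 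Two local slips are worth flagging. First, ``$a(c)\cap r\not=\emptyset$, i.e.\ $c\cap r\not=\emptyset$'' is not an equivalence: $c\cap r$ can be nonempty yet covered (your subsequent tentative-create-then-test-by-cardinality step recovers from this, as does the paper's Algorithm~\ref{algo:basic}, so this is cosmetic). Second, in the basic version the cardinality pass is not $O(m)$ per step: the recurrence needs the inclusion poset among up to $m$ combinations ($c.sup$ for every $c$), which costs $O(m^2)$ set operations per insertion; that is precisely where the $O(nm^2)$ total comes from, and precisely why the refined bound cannot simply ``recompute'' these lists each step.

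The genuine gap is the one you yourself flag as ``the delicate part,'' and it is not mere bookkeeping — it is where the paper's second algorithm does its real work. Three ingredients are needed and are absent from your plan. (a) A \emph{parent} mechanism: when several $c_1,\ldots,c_j\in\UC$ yield the same $c'=c_i\cap r$, one must show that either there is a unique inclusion-minimal pre-image (the parent), from which $c'.sup$ can be obtained by copying $c'.parent.sup$ and intersecting its elements with $r$, or else two incomparable minimal pre-images exist, in which case $\cap_i c_i$ was already covered in $\R$ and $c'$ can be discarded immediately. Without this, building the $sup$ list of a new combination from scratch reintroduces an $O(m)$ cost per new combination. (b) The ordering trick: process the combinations included in $r$ by non-decreasing cardinality so that $c.atsize$ is already correct when $c$ is reached, and build or update $c.sup$ \emph{only if} $c.atsize>0$; otherwise work is wasted on covered combinations that cannot be charged to any atom. (c) The counting inequality $\card{\mb{Inter}}\le\sum_{a\in A_r}\card{\UC'(a)}$, proved by mapping each $c\in\UC(\R_i)$ meeting $r$ to an atom of $A_r$ contained in $c\cap r$; this is what bounds even the ``find all combinations intersecting $r$'' step by the incidence count, after which the two-level association (each atom of $\R_i$ to a descendant atom of $\A(\R)$, each combination of $\UC_i$ containing it to a distinct member of $\C(\R)$) gives $\sum_i\sum_{a\in A_i}\card{\UC_i(a)}\le k\K m$. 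Your proposed charging ``to a unique (atom, container) incidence'' is the right target, but without (a)--(c) the per-step work is not actually confined to those incidences, and the $O((n+k\K\log m)m)$ bound does not follow.
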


\paragraph*{Application to forwarding loop detection}
%We first derive the consequences of this result for forwarding
%loop detection: given a representation of the atoms generated by the rule sets of a
%network, we have the following result for forwarding loop detection.
Theorem~\ref{th:main} has the following consequences for forwarding loop
detection.

\begin{corollary}\label{cor:loop}
Given a network $\mathcal{N}$ with collection $\R$ of $n$ rule sets with
$T_\ell$-bounded representation, forwarding loop detection can be performed in
$O(T_\ell \min(n+k\K\log m, n m) m + \k n_G m)$ time where $m$ is the number
of atoms in $\A(\R)$, $k$ is the overlapping degree of $\R$ and $\k$
(resp. $\K$) is the average overlapping degree of $\R$ (resp. $\C(\R)$).
If sets are represented by $\ell$-wildcard expressions (resp.
$(d,\ell)$-multi-ranges), the representation can be computed in
$O(\ell \min(n+k\K, n m) m  + \k n_G m)$
(resp. $O(\ell \min(\k \log^d m + k\K, n m) m + \k n_G m)$) 
time. 
\end{corollary}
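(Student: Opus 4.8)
The plan is to combine Theorem~\ref{th:main} with a class-by-class simulation of the network. First I would apply Theorem~\ref{th:main} to the collection $\R$ of rule sets, which yields $\UC(\R)$, the canonical representation of the $m$ atoms of $\A(\R)$, together with the container list $\R(c)$ of every $c\in\UC(\R)$; by Proposition~\ref{prop:repr} the atom $a(c)$ represented by $c$ has $\R(a(c))=\R(c)$, so we end up knowing, for every atom, exactly which rule sets contain it. Under the $T_\ell$-bounded hypothesis each elementary set operation costs $O(T_\ell)$, so this phase runs in $O(T_\ell\min(n+k\K\log m,nm)\,m)$ time, which is the first term of the claimed bound; for $\ell$-wildcards (resp.\ $(d,\ell)$-multi-ranges) Theorem~\ref{th:main} directly gives the sharper $O(\ell\min(n+k\K,nm)\,m)$ (resp.\ $O(\ell\min(\k\log^d m+k\K,nm)\,m)$).

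Second, for each atom $a$ I would build its forwarding graph. It is well defined as the common graph $G_h$ of all $h\in a$: headers of one atom lie in the same rule sets, hence trigger the same first-matching rule in every table (see \Cref{sec:hc}), and $\R(h)=\R(a)$ for $h\in a$. The key point is that, for $r\in\R$, $a\subseteq r\iff r\in\R(a)$, so the rule applied to $a$ at a node $u$ with table $(r_1,a_1),\dots,(r_p,a_p)$ is the first $(r_i,a_i)$ with $r_i\in\R(a)$. To obtain these first rules for all atom--node pairs without rescanning tables, I would precompute, in one pass per table, $\mu(u,\rho)$, the least index of a rule with set $\rho$ in $T(u)$ (or $\infty$ if none); then for atom $a$ at node $u$ the applied rule has index $\min_{\rho\in\R(a)}\mu(u,\rho)$, and its action (forward to some $v$, drop, deliver, or ``no match'') fixes the at-most-one out-edge of $u$ in $G_a$. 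Each table may be assumed to list at most $n$ rules, one per distinct rule set, since a later duplicate of a set is never triggered; hence the precomputation costs $O(n\,n_G)$, and since $n\le\sum_a\card{\R(a)}=\k m$ (each non-empty rule set contains at least one atom) this is $O(\k\,n_G m)$. The per-pair work sums to $O\big(n_G\sum_a\card{\R(a)}\big)=O(\k\,n_G m)$ as well. Finally, since $G_a$ has out-degree at most one, it contains a directed cycle iff iterating the forwarding map from some node returns to that node, decided in $O(n_G)$ time per atom; this last $O(n_G m)$ cost is absorbed into $O(\k\,n_G m)$. We answer ``yes'' iff some $G_a$ has a cycle, which is correct because $\{G_h:h\in H\}=\{G_a:a\in\A(\R)\}$.

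Adding the two phases gives exactly $O(T_\ell\min(n+k\K\log m,nm)\,m+\k\,n_G m)$, and the $\ell$-wildcard and $(d,\ell)$-multi-range variants follow by substituting the corresponding first-phase bound from Theorem~\ref{th:main}, since the second phase performs no set operations and is oblivious to the representation. The main thing to get right is the amortized accounting of the second phase: using the precomputed $\mu(\cdot,\cdot)$ table and the inequality $n\le\k m$ to bring the per-node work down to $O(\k m)$ rather than something proportional to the raw table sizes. Everything else is a direct invocation of Theorem~\ref{th:main} followed by cycle detection in functional graphs.
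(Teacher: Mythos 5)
Your proposal is correct and follows essentially the same route as the paper: invoke Theorem~\ref{th:main} to get the atoms with their container lists, then for each atom determine the first matching rule at every node using only the rules in $\R(a)$, and charge the work as $O\paren{n_G\sum_{a}\card{\R(a)}}=O(\k n_G m)$ before running an $O(n_G)$ cycle test per atom. The only difference is an implementation detail — you precompute a table $\mu(u,\rho)$ of least indices and take a minimum over $\R(a)$, whereas the paper merges, for each atom, node-sorted lists $L_r$ of rule occurrences for $r\in\R(a)$ — and both yield the same accounting.
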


This result can be extended to handle write actions as explained in Appendix~F
due to the lack of space.
The upper-bounds for forwarding loop detection listed in Table~\ref{tab:comp}
follow from Corollary~\ref{cor:loop} which is proved in Appendix~A.3. A key
ingredient consists in maintaining for each rule set a list that describes its
presence, priority and action for each node. Detecting a loop for a header
class $a$ then consists in merging the lists associated to the rule sets
containing $a$ (as provided by our atom representation) for obtaining the
forwarding graph $G_a=G_h$ for all $h\in a$. Directed cycle detection is
finally performed on each such graph.

In the rest of this Section, we prove Theorem~\ref{th:main} by introducing two incremental algorithms for atom computation and
analyzing their performance. The incremental approach allows to avoid exponential blow-up
even in cases where the number of combinations can be exponential in
the number $m$ of atoms (an example is given in Appendix~B.2).

% for updating the
%atoms generated by a collection $\R$ when a new set is added to $ \R $.

\paragraph*{Algorithms for updating atoms}
%
%Lemmas \ref{lem:addrule} and \ref{lem:cardinals} enable dynamic algorithms
%for updating the collection $\UC$ of uncovered combinations of a
%collection $\R$. 
We first propose a basic algorithm for updating
the collection $\UC$ of uncovered combinations of a collection $\R$ when
a set $r$ is added to $\R$.
The main idea is that  after adding $r$ to $\R$,
the only new uncovered combinations that can be created are intersections of
pre-existing uncovered combinations with $ r $
(see Lemma~\ref{lem:addrule} in Appendix~A.2).
%Following Lemma~\ref{lem:addrule}, 
We thus first add to $\UC$ the combinations $c\cap r$ for
$c\in \UC$. As this may introduce covered combinations,
we then compute the atom size $c.atsize=|a(c)|$ for each combination $c$.
It is then sufficient to remove any combination $c$ with $c.atsize=0$
to finally obtain $\UC(\R\cup\set{r})$.
This atom size computation is possible because we have
$d=\cup_{c\in \UC \mid c \subseteq d}a(c)$ for all $d\in\UC$.
% in any collection $\C'$ such that $\UC(\R)\subseteq \C'\subseteq \C(\R)$.
(see Lemma~\ref{lem:cardinals} in Appendix~A.2).
As this union is disjoint, we have 
$|a(d)|=|d| - \sum_{c\in \UC \mid c \subsetneq d} |a(c)|$.
We thus compute the inclusion relation between combinations 
and store in $c.sup$ the combinations that strictly contain $c$.
Initializing $c.atsize$ to $|c|$ for all $c$, we then
scan all combinations $c$ by non-decreasing cardinality
(or any topological order for inclusion) and subtract $c.atsize$
from $d.atsize$ for each $d\in c.sup$.
A simple proof by induction allows to prove that $c.atsize = |a(c)|$
when $c$ is scanned.
The whole process in summarized in Algorithm~\ref{algo:basic}.

\smallskip
\begin{algorithm}[H]
    \Procedure{$\mb{basicAdd}(r, \R, \UC )$}{
      $\UC' := \UC \cup \set{c\cap r \mid c\in \UC}$\;
      \ForEach{$c\in\UC'$}{
        $c.atsize := |c|$\;
        $c.sup := \set{d\in \UC'\mid c\subsetneq d}$\;
      }
      \ForEach{$c\in\UC'$ in non-decreasing cardinality order}{
        \lForEach{$d\in c.sup$}{
          $d.atsize := d.atsize - c.atsize$\;
        }
      }
      $\UC := \UC' \setminus \set{c\in\UC' \mid c.atsize = 0}$\;
    }
\caption{Add a set $r$ to a collection $\R$ and
  update the collection $\UC= \UC(\R)$ of its uncovered combinations
  accordingly.}
\label{algo:basic}
\end{algorithm}
\smallskip

The correctness of Algorithm~\ref{algo:basic} follows from the two above
remarks (that is Lemma~\ref{lem:addrule} and Lemma~\ref{lem:cardinals} in
Appendix~A.2).  Its main complexity cost comes from intersecting $r$
with each combination and computing the inclusion relation between
combinations, that is $O(nm)$ and $O(m^2)$ elementary set operations
respectively. Starting from $\UC=\set{H}$
and incrementally applying Algorithm~\ref{algo:basic} to each set in $\R$
thus allows to obtain $\UC(\R)$ with $O(nm^2)$ elementary set operations,
yielding the general bound of Theorem~\ref{th:main}.

%\bigskip

To derive better bounds for low overlapping degree $k$, we propose
a more involved algorithm that maintains
$c.sup$ and $c.atsize$ from one iteration to another and
makes only the necessary updates. This requires to handle several subtleties 
to enable lower complexity. 

We similarly start by computing
the collection $\mb{Inter}=\set{c\in \UC\mid c\cap r\not=\emptyset}$ 
of combinations intersecting $r$. A first subtlety comes from the fact
that several combinations $c$ may result in the same $c'=c\cap r$.
However, we are only interested in the combination $c$ which is minimal
for inclusion that we call the \emph{parent} of $c'$. The reason is that
$c'.sup$ can then be computed from $c.sup$. The parent is unique unless $c'$
is covered in which case $c'$ is marked as covered and discarded
(see the argument for $c'.sup$ computation later on).
To obtain right parent information, we thus process all $c\in\mb{Inter}$ 
by non-decreasing cardinality. The produced combinations $c'=c\cap r$
such $c'$ were not in $\UC$ are called \emph{new} combinations.
Their atom size is initialized to $c'.atsize = |c'|$.
See the ``Parent computation'' part of 
Algorithm~\ref{algo:add}.

We then remark that we only need to compute (or update) $c.sup$ 
for combinations that include $r$, which we store in a set $\mb{Incl}$.
We also note that $c.atsize$ needs to be computed when $c$ is new and
updated when $c$ is the parent of a new combination.
A second subtlety resides in computing (or updating) $c.sup$ only when
$c$ is not covered that is when $c.atsize$ (after computation) 
appears to be non-zero. As the computation of $c.sup$ lists is the most
heavy part of the computation, this is necessary to enable our complexity
analysis. For that purpose, we scan $\mb{Incl}$ by 
non-decreasing cardinality so that the correct value of $c.atsize$
is known when $c$ is scanned similarly as in Algorithm~\ref{algo:basic}.
However, we avoid any useless computation when $c.atsize$ is zero.
Otherwise, we compute 
(or update) $c.sup$ and decrease $d.atsize$ by $c.atsize$
from $d\in c.sup$ for adequate $d$:
if $c$ and $d$ where both in $\UC$, this computation has already been made;
it is only necessary when $d$ is new or when $d$ is the parent of $c$.
We optionally maintain for each combination
$c$ a list $c.cont$ that contain the list of sets $r\in\R$ that contain $c$
(Such lists are not necessary for the computation but they are useful
for loop detection as detailed in Appendix A.3).
See the ``Atom size computation'' part of 
Algorithm~\ref{algo:add}.

\begin{algorithm}[!ht]
  {\small
    \Procedure{$\mb{add}(r, \R, \UC = \UC(\R))$}{
      $\mb{New} := \emptyset$; $\mb{Incl} := \emptyset$;\;
      \Comment{ ---------------------- Parent computation ---------------------- }
      $\mb{Inter} := \set{ c\in \UC \mid c\cap r\not=\emptyset}$ \;
      Sort $\mb{Inter}$ by non-decreasing cardinality.\;
      \ForEach{$c\in\mb{Inter}$}{
        $c ':= c\cap r$\;
        \eIf{$c'\notin \mb{Incl}$}{
          \If{$c'\notin \UC$}{
            $\UC := \UC\cup\set{c'}$; $\mb{New} := \mb{New}\cup\set{c'}$;\;
            $c'.atsize := |c'|$; $c'.sup := \set{}$; $c'.cont := \set{}$
            \hspace{1mm}
            \Comment{Updated later.}
          }
          $c'.parent := c$; $c'.covered := false$; 
          $\mb{Incl} := \mb{Incl} \cup \set{c'}$;\;
        }{
        \lIf{$c'.parent\not\subseteq c$}{$c'.covered := true$}
      }
    }
    Remove from $\mb{Incl}$, $\mb{New}$ and $ \UC $
    all $c$ such that $c.covered = true$.\;
    \Comment{ ---------------------- Atom size computation ---------------------- }
    Sort $\mb{Incl}$ by non-decreasing cardinality.\;
    \ForEach{$c\in\mb{Incl}$}{
      \If{$c.atsize > 0$}{
        \Comment{Adjust $ c.sup $, $c.cont$ 
          and update $d.atsize$ for impacted $d\supsetneq c$:}
        %    \Comment{Update $c.sup = \set{d\in \UC\mid c\subsetneq d}$
        %            and $d.atsize$ for $d\in c.sup$:}
        \If{$c\in\mb{New}$}{
          $c.parent.atsize := c.parent.atsize - c.atsize$\;
          $c.sup := \set{c.parent}\cup c.parent.sup$\; 
          $c.cont := c.parent.cont$\;
        }
        %      
        %      \lIf{$c\in\mb{New}$}{
        %        $c.sup := \set{c.parent}\cup c.parent.sup$
        %      }\;
        $c.sup := c.sup \cup
        \set{d\cap r\mid d\in c.sup
          \mbox{ and }d\cap r\in\mb{Incl}\setminus\set{c}}$\;
        $c.cont := c.cont\cup\set{r}$\;
        \ForEach{$d\in c.sup$ s.t. $d\in\mb{New}$}{
          $d.atsize := d.atsize - c.atsize$\;
        }
      }
    }
    \Comment{ ---------------------- Remove covered combinations ---------------------- }
    \ForEach{$c\in\mb{Incl}$}{
      Remove from $c.sup$ any $d$ such that $d.atsize=0$.\;
      \lIf{$c.atsize = 0$}{
        $\UC := \UC\setminus\set{c}$; 
        $\mb{Incl} := \mb{Incl}\setminus\set{c}$; 
      }\;
      \lIf{$c\in\mb{New}$ and $c.parent.atsize = 0$}{
        $\UC := \UC\setminus\set{c.parent}$
      }
    }
    %  \Return{$\mb{Incl}$}
  }
}
\vspace{-1mm}
\caption{Add a set $r$ to a collection $\R$ and
  update the collection $\UC$ of its uncovered combinations
  accordingly.}
\label{algo:add}
\end{algorithm}

A last critical point resides in the computation of $c'.sup$ for
each new combination $c'$.
The $c'.sup$ list can be obtained from $c'.parent.sup$ 
by copying and also intersecting elements of $c'.parent.sup$ with
$r$. This is sufficient: for $d\in\UC$ such that $c'\subsetneq d\cap r$,
we can consider $c=c'.parent\cap d$. If $c\in\UC$,
$c'.parent = c$ by minimality of $c'.parent$ and
we thus have $d\in c'.parent.sup$. The case where $c\notin\UC$ and
$d\notin c'.parent.sup$ cannot happen as it would imply that two
different combinations $c_1=c'.parent$ and $c_2\subseteq c$ generate 
$c'$ by intersection with $r$ ($c_1\cap r=c_2\cap r=c'$) and are both minimal
for inclusion. In such case, $c_1\cap c_2$ was covered in $\R$ and 
so would be $c'$ in $\R$ (and also in $\R\cup\set{r}$). That is why such
combination $c'$ are already discarded during parent computation.
On the other hand, the list $c.sup$ of
a combination $c\in\UC$ can be updated by intersecting
elements of $c.sup$ with $r$: when $c\subsetneq d\cap r$ for $c\subseteq r$,
we have $c\subsetneq d$.

Finally, combinations $c$ with $c.atsize=0$ are discarded and removed from
$b.sup$ list of remaining combinations as detailed in the
``Remove covered combinations'' part of
Algorithm~\ref{algo:add}. 

\medskip

The main argument in the complexity analyzis of Algorithm~\ref{algo:add}
comes from bounding the size of each $c.sup$ list by $|\UC(a(c))|$
(this is where the overlapping degree of $\C(\R)$ is used). 
The number of elementary set operations performed is indeed
$O(m+\sum_{a\in A_r}\card{\UC'(a)}\log m)$ where $m$ denotes the number of
atoms of $\R$, $A_r=\set{a\in\A(\R') \mid a\subseteq r}$ denotes the atoms of
$\R'=\R\cup\set{r}$ included in $r$, and $\UC'(a)$ denotes the uncovered
combinations of $\R'$ that contain $a$.  A key step consists in proving
$|\mb{Inter}|\le \sum_{a\in A_r}\card{\UC'(a)}$.  
The $\log m$ accounts for membership tests and add/remove operations
on collections of combinations.
In the case of
$\ell$-wildcard and $(d,\ell)$-multi-ranges, this factor can be saved
by storing collections of sets in tries rather than balanced binary search
trees.  With $(d,\ell)$-multi-ranges, a segment tree allows to retrieve
$\mb{Inter}$ in $O(|\mb{Inter}| + \log^d m)$ elementary set
operations~\cite{BergGeomBook,intersectionBoxes}.

The various bounds for low overlapping degree
of Theorem~\ref{th:main} results from applying
iteratively Algorithm~\ref{algo:add} for
each set of $\R$ and by carefully bounding the sums of $\card{\UC'(a)}$
terms. Intuitively, each atom $a$ generated by the collection $\R_i$
of the first $i$ sets of $\R$ can be associated to an atom $a'$ of $\R$ 
such that $\card{\UC'(a)}\le \card{\C(a')}$ where $\C=\C(\R)$.
As each atom is included in $k$ sets of $\R$ at most, this allows to
bound the overall sum of $\card{\UC'(a)}$ terms by $k\K m$. 
Details are given in Appendix~A.2.

\section{Comparison with previous works}\label{sec:comparison} % ------------

%As far as we know, all previous works rely on complement computations
%(or equivalent operations).
%The complement of a single set generally requires to be represented as a union
%of several intermediate sets (up to $\ell$ for $\ell$-wildcards and up to
%$2d-1$ for $d$-multi-ranges). 
We give in Appendix~C
more details about ``linear fragmentation'', a phenomenon observed by 
Kazemian et al.~\cite{hsa-report}, and propose small overlap
degree as a plausible cause.
The notion of uncovered combination
is linked to that of weak completion 
introduced by~\cite{boutier} in the context of rule-conflict resolution
as detailed in Appendix~D.
We now provide examples where use of set complement computation can lead
to exponential blow-up in previous work. 

%% As previous work rely on computations of complement of sets, more assumptions
%% on the data-structure used for set representation is necessary. This tends to 
%% be costly as the complement of single set requires generally to be
%% represented as the union of several intermediate sets (up to $\ell$ for
%% wildcards, and up to $2d-1$ for $d$ dimensional multi-ranges).  Although
%% Veriflow~\cite{veriflow} and HSA/NetPlumber~\cite{hsa,netplumber} behave
%% generally well in practice, there are case where the number of such
%% intermediate sets generated can explode exponentially in $\min(\ell,n)$ for
%% wildcards and in $d$ for multi-ranges. This is true even for networks with
%% overlapping degree as low as 2. This phenomenon comes from the fact the
%% intermediate sets generated by complement operations may increase the
%% overlapping degree up to $\ell$ for wildcards and up to $d$ for multi-ranges.
%% We provide lower-bounds on the worst case complexity of both HSA/NetPlumber and
%% Veriflow showing that their complexity can be exponential in $\min(\ell,n)$ for
%% wildcard matching in networks where the number $m$ of header classes is linear
%% in $n$. Such cases are constructed with simple disjoint rules and could occur
%% in practice.

\subsection{Veriflow}\label{sec:exp-verif}

Veriflow~\cite{veriflow} incrementally computes a partition into sub-classes
that forms a refinement of the header classes: when a rule $r$ is added, each
sub-class $c$ is replaced by $c\cap r$ and a partition of $c\setminus r$.
%Our computation of uncovered combinations is thus part of the sub-classes
%computed by Veriflow. For constant overlapping degree, our cardinality
%computations cost $O(\ell)$ per combination and our method is then
%always competitive compared to Veriflow. 
%We now show that this representation can grow exponentially with $d$,
%even with low number of header classes.
%sometimes offer a speed-up of $\frac{2^d}{d}$ where $d$ denotes the number
%of header fields.
%
Veriflow benefits from the hypothesis that
 headers can be decomposed into  $d$ fixed fields
and that each rule set can be represented by a multi-range
$r=[a_1,b_1]\times\cdots\times[a_d,b_d]$. 
%% Such a multi-range $r$ 
%% represents the set of headers
%% whose first field (read as a binary integer) is in $[a_1,b_1]$,
%% and whose second field is in $[a_2,b_2]$, and so on.
%% Similarly, each sub-class $s$ is represented by a multi-range.
The intersection of two multi-ranges is obviously a multi-range.
However, set difference is obtained by intersection with the complement
%The complementary of $r$
which is represented as the union of up to $2d-1$ multi-ranges.
%% \begin{itemize}
%% \item $[0,a_1-1]\times H_{2..d}$ and $[b_1+1,\infty_1]\times H_{2..d}$,
%% \item $[a_1,b_1]\times [0,a_2-1]\times H_{3..d}$ and
%%       $[a_1,b_1]\times [b_2+1,\infty_2]\times H_{3..d}$,
%% \item $\cdots$,
%% \item $[a_1,b_1]\times \cdots\times [a_{d-1},b_{d-1}]\times [0,a_d-1]$ and
%%       $[a_1,b_1]\times \cdots\times [a_{d-1},b_{d-1}]\times [b_d+1,\infty_d]$,
%% \end{itemize}
%% where $\infty_i$ denotes the maximum
%% possible value in field $i$,
%% and $H_{i..j}=[0,\infty_i]\times\cdots\times [0,\infty_j]$ denotes the
%% multi-range of all possible values for fields $i,\ldots,j$ for
%% $1\le i \le j \le d$.
%The set difference $c\setminus r$ can thus be represented by
%$2d-1$ multi-ranges at most.
%
%We now provide an example where the number of sub-classes generated by Veriflow
%can increase exponentially with the number of fields $d$. For each field $i$,
%consider two values $0<a_i<b_i<\infty_i$ and consider the rules:
%We construct a difficult input for Veriflow as follows.
We prove in Appendix~B.1 that successive difference with rules of the form 
$[O,\infty]^{i-1}\times [a_j,a_j] \times [b,b]^{d-i}$ with
$0 < a_1,\ldots,a_p < b$ can generate
$\Omega(\paren{\frac{n}{d}}^d\frac{m}{d})$ multi-ranges in the
computations of Veriflow as indicated in Table~\ref{tab:comp}.
%% As it has overlapping degree 2, this justifies the two lower-bounds
%% indicated for Veriflow in Table~\ref{tab:comp} for $d$-multi-ranges.
%% (The lower-bounds for $\ell$-wildcards are similarly obtained
%% in Appendix~B.1.)
%% For integral $p$ and $\ell\ge d\ceil{\log 2p}$, we consider that headers
%% are decomposed in $d$  equal-length fields. Each field can thus 
%% encode integral values up to $b=2p$ at least and we can consider
%% $p$ integral values 
%% $0 < a_1,\ldots,a_p < b$ such that $a_i+1<a_{i+1}$ for 
%% $i=0..p-1$. The collection consists in
%% the $n=dp+1$ following multi-ranges:
%% \[
%%   r_0=H_{1..d}, \hspace{5mm} \mbox{and} \hspace{5mm}
%%   r_i^j=H_{1..i-1} \times [a_j,a_j] \times [b,b]^{d-i}
%%    \mbox{ for } i,j\in [1..d]\times [1..p],
%% \]
%% where $H_{i..j}=[0,\infty_i]\times\cdots\times [0,\infty_j]$ denotes the
%% multi-range of all possible values for fields $i,\ldots,j$.
%% As the sets $r_i^j$
%% for $i,j\in [1..d]\times [1..p]$ are pairwise disjoint, the number of
%% header classes is $m=dp+1=n$.  

\subsection{HSA / NetPlumber}\label{sec:netpl-exp}

HSA/NetPlumber~\cite{hsa,netplumber} use clever heuristics to efficiently 
compute the set of headers $H_P$ than can traverse a given path $P$. 
An important one
consists in lazy subtraction: set difference computations are postponed
until the end of the path. For that purpose, this set $H_P$ is represented
as a union of terms of the form $s=c_0\setminus \cup_{i=1..p} c_i$ where
the elementary sets $c_0,\ldots,c_p$ are represented with wildcards. 
The emptiness of such terms is regularly tested.
A simple heuristic is used during the construction
of the path: $s$ is obviously empty if $c_0$ is included in $c_i$ for some $i\ge 1$. But if the path loops,  HSA has to develop the corresponding
terms into a union of wildcards to determine if one of them may produce a
forwarding loop.

We now provide an example where this emptiness test can take exponential time.
Consider a node whose forwarding table consists in $\ell+1$ rules with
following rule sets:
\[
  r_0=1^\ell, \hspace{5mm}
  r_i=1^{\ell-i}0*^{i-1} \mbox{ for } i=1..\ell, \hspace{5mm} \mbox{and} \hspace{5mm}
  r_{\ell+1}=*^\ell.
\]
All rules are associated with the drop action except the last rule
(with rule set $r_{\ell+1}$) whose action is to forward to the node itself.
Such a forwarding table is depicted in Figure~\ref{fig:one_node_loop} for
$\ell=4$. Starting a loop detection from that node, HSA detects
a loop for headers in $r_{\ell+1}\setminus\cup_{i=0..\ell}r_i$.
The emptiness of this term is thus tested. For that purpose, HSA
represents the complement of $r_i$ with
$0*^{\ell-1}\cup *0*^{\ell-2}\cup\cdots\cup *^{\ell-i-1}0*^i\cup *^{\ell-i}1*^{i-1}$.
Note that each of the $\ell-i$ wildcard expressions in that union have only one
non-$*$ letter.
Distributivity is then used to compute $r_{\ell+1}\setminus\cup_{i=0..\ell}r_i$
as $\overline{r_0}\cap \cdots \cap \overline{r_\ell}$. After expanding the
first $j-1$ intersections, HSA thus obtains a union of wildcards with
$j$ letters in $\set{0,1}$ and $\ell-j$ letters equal to $*$ that has
to be intersected with $\overline{r_{j+1}}\cap \cdots \cap \overline{r_\ell}$.
In particular, this unions contains all strings with $j$ letters equal to $0$
and $\ell-j$ equal to $*$. All $\ell$-letter strings with alphabet 
$\set{*,0}$ are produced during the computation which thus
requires $\Omega(\ell 2^\ell)$ time.
For testing a network with $n_G$ similar nodes, HSA thus requires time
$\Omega(\ell n_G 2^\ell)$. As all sets $r_0,\ldots,r_\ell$ are pairwise
disjoint, the overlapping degree of the collection is $k=2$ and this
justifies the two lower-bounds indicated for NetPlumber in
Table~\ref{tab:comp}.

%% \begin{figure}
%% \includegraphics[width=\textwidth]{hsa_loop}
%% \caption{A network whose verification is exponential with HSA.}
%% \label{fig:hsa_loop}
%% \end{figure}

\section{Future work}\label{sec:persp} % ---------------------
%\vspace{-.3cm}

Our approach could be
naturally integrated in the Veriflow~\cite{veriflow} 
framework both for speed-up and
performance-guarantee considerations. %Our model does not include write
%actions. However, we believe that our ideas could
Our ideas can also be integrated in the
NetPlumber~\cite{netplumber} framework (see the similar approach 
proposed in Appendix~E). 
%% Updating a collection of uncovered combinations can be done in persistent
%% style for managing efficiently several collections as they follow different 
%% paths. Write operations could be recorded in a specific wildcard expression
%% that serves as a general mask for all wildcards in the collection, allowing 
%% to apply efficiently such ``network transfer function'' (using
%% HSA/NetPlumber terminology) to a collection.
This would allow to enhance the
emptiness tests performed within NetPlumber to guarantee polynomial time
execution when the number of header classes is polynomially bounded.
In the context of multi-ranges, the emptiness test of expressions
is equivalent to the Klee measure problem which consists
in computing the volume of a union of boxes.  Indeed, expression
$r_p\setminus\cup_{i=1..p-1}r_i$ is empty when the volume of the union of the
boxes $r_1\cap r_p,\ldots,r_{p-1}\cap r_p$ equals that of $r_p$.  According to
recent work~\cite{easyKlee}, the complexity of this problem is believed to be
$\Theta(n^{d/2})$. It would be interesting to determine if such low complexity
bounds extend to atom computation in the case of multi-ranges rules.

\bibliography{checking_forwarding}

\newpage
\section*{Appendix A : Proof details of Theorem~\ref{th:main} and Corollary~\ref{cor:loop}}

We present here a refined version of Theorem~\ref{th:main}
with more hypothesis about the data-structure used to store a collection
of sets. We first review these hypothesis and then the proof details of
both versions.

\subsection*{A.1 Collection of sets representation}
% goto App. A.1

When manipulating a collection of $p$ sets in $\D$, we assume that their
representations are stored in a collection data-structure allowing to
dynamically add, remove or test membership of a set.  Such operation are called
\emph{$p$-collection operations}.  We can use a balanced binary search tree
when comparisons according to a total order can be performed. Such comparison
can usually be obtained by comparing directly the binary representations
themselves of the set in linear time (and thus $O(T_H)$ for sets with
$T_H$-bounded representation).
It is considered as an elementary set operation.
In the case of wildcard expressions, the complexity of these operations can be
reduced to $O(\ell)$ time by using a trie or a Patricia tree. Our algorithms
will also make use of an operation similar to stabbing query that we call
\emph{$p$-intersection} query. It consists in producing the list $L_s$ of sets
in a collection $\R$ of $p$ sets that intersect a given query set $s$
($L_s=\set{r\in \R\mid s\cap r\not=\emptyset}$).  We additionally require that
the list $L_s$ is topologically sorted according to inclusion. We say that
$p$-intersection queries can be answered with \emph{overhead}
$(T_{inter}(p),T_{updt}(p))$ when dynamically adding or removing a set from the
collection takes time $T_{updt}(p)$ at most and the $p$-intersection query for
any set $s\in\D$ takes time $T_{inter}(p)+|L_s|T_H$ at most.  In the case of
$d$-dimensional multi-ranges, a segment tree allows to answer $p$-intersection
queries with overhead $(O(\log^d p), O(\log^d
p))$~\cite{BergGeomBook,intersectionBoxes}.  In the case of wildcard
expressions, a trie or a Patricia tree allows to answer $p$-intersection
queries with overhead $(O(\ell p), O(\ell))$ (the whole tree has to be
traversed in the worse case, but no sorting is necessary as the result is
naturally obtained according to lexicographic order).

\subsection*{A.2 Incremental computation of atoms (Theorem~\ref{th:main})}

The following is a refinement of Theorem~\ref{th:main} with respect
to data-structures that provide time bounds on $p$-collection operations and
$p$-intersection queries. For the sake of simplicity of
asymptotic expressions, we make the very loose assumption that
$\ell=o(m)$ and $n\le m$. (We are mainly interested in the case where $m$ is
large. Note also that examples with $m<n$ would be very peculiar.)

\begin{theorem}\label{th:mainbis}
Given a space set $H$ and
a collection $\R$ of $n$ subsets of $H$, the collection $\UC(\R)$ 
of combinations that canonically 
%inclusion-wise
 represent the atoms generated by $\R$
can be incrementally computed with $O(\min(n+k\K\log m, \k m\log m, n m) m)$
elementary set operations where
$m$ denotes the number of atoms generated by $\R$, $k$ denotes the overlapping
degree of $\R$, $\k$ denotes the average overlapping
degree of $\R$ and $\K$ denotes the average overlapping degree of $\C(\R)$.

More precisely, if the data-structures used for representing sets 
and collections of sets enable elementary set operations within time $T_{set}$,
$p$-collection operations within time $T_{coll}(p)$ and $p$-intersection
queries with overhead $(T_{inter}(p),T_{updt}(m))$, then the representation
of the atoms generated by $\R$ can be computed in 
$O(n T_{inter}(m)  + \k m T_{updt}(m)  
  + \min(k\K,km) m (T_{set} + T_{coll}(m)))$ time.
\end{theorem}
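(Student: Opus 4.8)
The plan is to prove \Cref{th:mainbis} in four parts: two structural lemmas; correctness and cost of \mb{basicAdd} (\Cref{algo:basic}); correctness of \mb{add} (\Cref{algo:add}); and the amortized complexity of \mb{add} composed over all $n$ insertions. First I would isolate the two structural facts driving both procedures. Lemma~\ref{lem:addrule}: when a set $r$ is added to $\R$, every combination of $\C(\R\cup\set{r})$ lies in $\C(\R)$ or has the form $c\cap r$ with $c\in\C(\R)$, and every \emph{uncovered} combination of $\R\cup\set{r}$ is either already uncovered in $\R$ or equals $c\cap r$ for an uncovered $c$ of $\R$; this follows from \Cref{prop:repr} and the fact that each atom of $\R\cup\set{r}$ refines an atom of $\R$ and lies either inside $r$ or inside $\overline r$. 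Lemma~\ref{lem:cardinals}: for any collection $\R$ and any $d\in\UC(\R)$ one has the disjoint decomposition $d=\bigsqcup_{c\in\UC(\R),\,c\subseteq d}a(c)$, hence $\card{a(d)}=\card{d}-\sum_{c\in\UC(\R),\,c\subsetneq d}\card{a(c)}$ --- disjointness because the $a(c)$ are distinct atoms (\Cref{prop:repr}), and $d$ being covered by them because every atom inside $d$ has $d$ among the intersections of its containers.

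Given these, correctness of \mb{basicAdd} is immediate: by Lemma~\ref{lem:addrule}, $\UC'=\UC\cup\set{c\cap r\mid c\in\UC}$ contains $\UC(\R\cup\set{r})$ and stays inside $\C(\R\cup\set{r})$, so it suffices to discard the $c\in\UC'$ with $a(c)=\emptyset$; by Lemma~\ref{lem:cardinals} applied in $\R\cup\set{r}$, scanning $\UC'$ by non-decreasing cardinality and subtracting $c.atsize$ from $d.atsize$ for $d\in c.sup$ computes $c.atsize=\card{a(c)}$ correctly, by induction on the scan order since every $c\subsetneq d$ is scanned before $d$. Each call uses $O(nm)$ intersections to build $\UC'$ and $O(m^2)$ elementary set operations for the $c.sup$ lists and the subtractions, so iterating over the $n$ sets from $\UC=\set{H}$ costs $O(nm^2)$ elementary set operations (each collection operation costing $T_{coll}(m)$), giving the $nm\cdot m$ term of \Cref{th:mainbis}.

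The core is \mb{add}: I would prove, by induction on the number $i$ of inserted sets, the invariant that $\UC=\UC(\R_i)$ and, for every $c\in\UC$, $c.atsize=\card{a(c)}$, $c.sup=\set{d\in\UC\mid c\subsetneq d}$, and optionally $c.cont=\R_i(c)$. The inductive step splits along the three phases. \emph{Parent correctness}: processing $\mb{Inter}$ by non-decreasing cardinality, the first $c$ producing a given $c'=c\cap r$ is the unique inclusion-minimal one, unless two incomparable minimal parents exist, in which case their intersection --- hence $c'$ --- was already covered in $\R$; the test $c'.parent\not\subseteq c$ detects exactly this, so discarding such $c'$ is sound. \emph{Atom-size correctness}: only combinations containing $r$ (the set $\mb{Incl}$) or new ones can change their $sup$/$atsize$, and $c.atsize$ needs updating only when $c$ is new or is the parent of a new combination, so scanning $\mb{Incl}$ by non-decreasing cardinality makes the correct $c.atsize$ available when $c$ is processed, as in \mb{basicAdd}. \emph{$sup$-list correctness}: for a new $c'$, $c'.sup$ is recovered from $c'.parent.sup$ by copying it and inserting $d\cap r$ for $d\in c'.parent.sup$, and the case analysis in the body (using minimality of $c'.parent$ and the prior discarding of doubly-parented combinations) shows this yields all $d'\in\UC'$ with $c'\subsetneq d'$; for a surviving $c\in\UC$ with $c\subseteq r$, the update $d\mapsto d\cap r$ is correct because $c\subsetneq d\cap r$ forces $c\subsetneq d$. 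A last pass removes zero-size combinations and prunes them from the remaining $sup$ lists, restoring the invariant.

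The main obstacle is the complexity of \mb{add} summed over all insertions. For a single call the key bound is $\card{c.sup}\le\card{\UC'(a(c))}$, with $\UC'$ the uncovered combinations of $\R'=\R\cup\set{r}$ and $\UC'(a)=\set{d\in\UC'\mid a\subseteq d}$, since $c.sup$ consists of distinct combinations of $\UC'$ each containing $a(c)$; together with $\card{\mb{Inter}}\le\sum_{a\in A_r}\card{\UC'(a)}$ for $A_r=\set{a\in\A(\R')\mid a\subseteq r}$ (each combination of $\mb{Inter}$ contains some atom of $A_r$, and combinations containing a fixed atom are distinct), one call costs $O\paren{T_{inter}(m)+\paren{1+\sum_{a\in A_r}\card{\UC'(a)}}\paren{T_{updt}(m)+T_{set}+T_{coll}(m)}}$. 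Summing over the $n$ insertions, the delicate step is to bound $\sum_{i=1}^{n}\sum_{a\in\A(\R_i),\,a\subseteq r_i}\card{\UC_i'(a)}$: I would injectively assign to each atom $a$ appearing in this sum an atom $a'$ of the final $\R$ with $a'\subseteq a$, pull back injectively each uncovered combination of $\R_i\cup\set{r_i}$ containing $a$ to a combination of $\R$ containing $a'$ so that $\card{\UC_i'(a)}\le\card{\C(\R)(a')}$, and note that a fixed atom $a'$ is charged at most $\card{\R(a')}\le k$ times; since $\sum_{a'\in\A(\R)}\card{\C(\R)(a')}=\K m$ by definition of $\K$, this sum is $O(k\K m)$, while cruder bookkeeping gives the $\k m\log m$ and $nm$ alternatives. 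Substituting the data-structure parameters ($T_{coll},T_{inter},T_{updt}=O(\log m)$ for balanced search trees; the trie bounds for $\ell$-wildcards; the segment-tree bounds for $(d,\ell)$-multi-ranges) and taking the minimum with the $O(nm^2)$ bound of \mb{basicAdd} yields all the stated estimates; \Cref{cor:loop} then follows by merging, for each atom, the per-node presence/priority/action lists of its $\le k$ containing rule sets and running directed-cycle detection on each resulting forwarding graph.
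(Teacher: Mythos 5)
Your proposal reproduces the paper's own proof essentially step for step: the same two lemmas (Lemma~\ref{lem:addrule} and Lemma~\ref{lem:cardinals}), the same correctness arguments for Algorithms~\ref{algo:basic} and~\ref{algo:add}, the same key bounds $\card{c.sup}\le\card{\UC'(a(c))}$ and $\card{\mb{Inter}}\le\sum_{a\in A_r}\card{\UC'(a)}$, and the same charging of each intermediate atom to a descendant atom of $\A(\R)$, charged at most $k$ times, to obtain the $k\K m$ term. The one step you state more tersely than it deserves is the $\mb{Inter}$ bound: the combinations you charge to a fixed atom $a\in A_r$ lie in $\UC(\R)$ rather than $\UC(\R')$, so the injection sending each such $c$ to $c$ or to $c\cap r$ inside $\UC'(a)$ still has to be spelled out, as the paper does in the proof of Proposition~\ref{prop:incr}.
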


%% Note that there is no equality but inclusion: some of the pre-existing
%% uncovered combinations may become covered.  This is stated by the following
%% lemma.
The following lemma formally states that uncovered combinations
of $\R\cup\set{r}$ can be obtained from $\UC(\R)$ and justifies the
overall approach of Algorithm~\ref{algo:add}.

\begin{lemma}
\label{lem:addrule}
Given a new rule $ r \subseteq H $, the collection $ \UC(\R') $ of
uncovered combinations of $ \R' = \R \cup \{r\} $ can be obtained by
intersecting uncovered combinations in $\UC(\R)$ with $ r $. More
precisely, we have $ \UC(\R') \subseteq \UC(\R) \cup \{c\cap r | c \in
\UC(\R)\} $.
\end{lemma}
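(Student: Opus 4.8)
The plan is to prove the stated inclusion $\UC(\R')\subseteq \UC(\R)\cup\{c\cap r\mid c\in\UC(\R)\}$ directly. Fix an arbitrary $c'\in\UC(\R')$, write $\R(c')$ for the containers of $c'$ \emph{within $\R$} (so $\R(c')=\R'(c')\setminus\{r\}$ if $c'\subseteq r$, and $\R(c')=\R'(c')$ otherwise), and set $c:=\bigcap_{s\in\R(c')}s$, with the usual convention that an empty intersection equals $H$. The first routine observation I would record is that $c'$ equals the intersection of all its $\R'$-containers, $c'=\bigcap_{s\in\R'(c')}s$: one inclusion is immediate, and the other follows because any $R'\subseteq\R'$ with $c'=\bigcap_{s\in R'}s$ satisfies $R'\subseteq\R'(c')$. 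Consequently $c'=c$ when $c'\not\subseteq r$ and $c'=c\cap r$ when $c'\subseteq r$; in particular $c\supseteq c'\neq\emptyset$, so $c$ is a genuine combination of $\R$, i.e. $c\in\C(\R)$.

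The crux is then to show $c\in\UC(\R)$, i.e. that $a_\R(c):=c\cap\bigcap_{s\in\R\setminus\R(c)}\overline{s}$ is nonempty (here $a_\R(\cdot)$ and $a_{\R'}(\cdot)$ denote the map $a(\cdot)$ of \Cref{sec:atoms} relative to $\R$ and to $\R'$). The key step is the \emph{container-preservation} claim $\R(c)=\R(c')$: the inclusion $\R(c)\subseteq\R(c')$ holds because $c\supseteq c'$ (a superset has no more containers, since $c\subseteq s\Rightarrow c'\subseteq s$), while $\R(c')\subseteq\R(c)$ is immediate from the definition of $c$ as the intersection of the sets of $\R(c')$. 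Given this, I would compare the "negative parts": since $\R(c')=\R'(c')\cap\R$ we get $\R\setminus\R(c)=\R\setminus\R'(c')$, and in both cases ($c'\subseteq r$ or not) one checks that this set is contained in $\R'\setminus\R'(c')$. Combining $c\supseteq c'$ with $\R\setminus\R(c)\subseteq\R'\setminus\R'(c')$ gives $a_\R(c)\supseteq a_{\R'}(c')$, and the latter is nonempty because $c'\in\UC(\R')$; hence $c\in\UC(\R)$. Since $c'\in\{c,\,c\cap r\}$, this yields $c'\in\UC(\R)\cup\{c\cap r\mid c\in\UC(\R)\}$, completing the proof.

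I expect the container-preservation claim $\R(c)=\R(c')$, and its consequence that enlarging $\R$ to $\R'=\R\cup\{r\}$ only \emph{shrinks} the negative part $\R\setminus\R(\cdot)$ of a combination, to be the one point worth spelling out carefully; everything else is bookkeeping with the definitions of $\C(\R)$, $a(\cdot)$, $\UC(\R)$ and the atom characterization of \Cref{sec:atoms}. As an alternative route one could argue through \Cref{prop:repr}: $c'$ represents an atom $a'$ of $\R'$, whose $\R$-containers $\R(a')\cap\R$ determine an atom of $\R$ whose positive part is exactly $c$; but the direct computation above sidesteps invoking the bijection twice and keeps the emptiness test transparent.
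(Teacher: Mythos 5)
Your proof is correct and follows essentially the same route as the paper's: take $c'\in\UC(\R')$, define $c$ as the intersection of its containers in $\R$, observe $c'\in\{c,\,c\cap r\}$, and conclude $c$ is uncovered in $\R$ because $c\supseteq c'$ while the non-containers of $c$ in $\R$ are among the non-containers of $c'$ in $\R'$, so $a_{\R}(c)\supseteq a_{\R'}(c')\neq\emptyset$. The only difference is presentational: you spell out the container-preservation identity $\R(c)=\R(c')$ explicitly, which the paper leaves implicit.
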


\begin{proof}
Consider an uncovered combination $ c'\in\UC(\R') $.  Let $ c $ be the
intersection of the containers of $c'$ in $ \R $: $ c = \cap_{r \in \R(c')} r
$. We have either $ c' = c $ if $\R'(c') = \R(c') $ or $ c' = c\cap r $
if $ \R'(c') = \R(c') \cup r$.  To conclude, it is thus sufficient to
show that $ c $ is uncovered in $ \R$.  This follows from $c'\subseteq c$ and
$\R'(c')\subseteq\R(c)\cup\set{r}$: the non-containers of $c$ in $\R$ are
also non-containers of $c'$ in $\R'$ and if $c$ was covered, so would be $c'$.
%% As $ c' $ is characteristic in $ \R' $, we have $\paren{\cap_{r\in
%%     \R'(c')}r} \cap \paren{\cap_{r\in \R' \setminus
%%     \R'(c')}\overline{r}} \neq \emptyset$. $ \R(c) $ and $ \R
%%     \setminus \R(c) $ are included in $ \R'(c') $ and $ \R'
%%     \setminus \R'(c') $ respectively (one of the two differs by $
%%     r $), so $\paren{\cap_{r\in \R(c)}r} \cap
%%     \paren{\cap_{r\in \R \setminus \R(c)}\overline{r}} \neq
%%     \emptyset$: $ c $ is not covered in $ \R $ and belongs to $
%%     \CS(\R) $.
\end{proof}

The following Lemma, which expresses any combination as a disjoint union 
of atoms justify the computation of atom cardinalities in 
Algorithm~\ref{algo:add} by scanning $c.sup$ lists for $c\in\UC$.
\begin{lemma}\label{lem:cardinals}
Given a combination collection $\C'\subseteq \C(\R)$ containing $\UC(\R)$, we
have $d=\cup_{c\in \C' \mid c \subseteq d}a(c)$ for all $d\in \C'$.
This union is disjoint and 
we have $|a(d)|=|d| - \sum_{c\in \C' \mid c \subsetneq d}|a(c)|$.
\end{lemma}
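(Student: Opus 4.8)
The statement has three parts: (i) the covering identity $d=\bigcup_{c\in\C'\mid c\subseteq d}a(c)$; (ii) disjointness of this union; (iii) the cardinality recurrence, which is an immediate consequence of (i) and (ii). I would handle (ii) first since it is the cleanest and is reused in (i): for two distinct combinations $c_1,c_2\in\C(\R)$, the sets $a(c_1)$ and $a(c_2)$ are disjoint. This is essentially \Cref{prop:repr}: each nonempty $a(c)$ is an atom of $\sigma(\R)$, and distinct uncovered combinations correspond to distinct atoms (if $c_1\ne c_2$ but $a(c_1)=a(c_2)\ne\emptyset$, then $c_1=c(a(c_1))=c(a(c_2))=c_2$, a contradiction); two distinct atoms are disjoint by minimality. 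If one of $a(c_1),a(c_2)$ is empty there is nothing to prove. So the union in the statement is a union of pairwise-disjoint sets.

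For part (i), fix $d\in\C'$. The inclusion $\bigcup_{c\subseteq d}a(c)\subseteq d$ is trivial since each $a(c)=c\cap(\cdots)\subseteq c\subseteq d$. For the reverse inclusion, take any $x\in d$ and let $a$ be the atom of $\sigma(\R)$ containing $x$ (atoms partition $H$, hence exist). By \Cref{prop:repr}, $a=a(c)$ for the uncovered combination $c:=c(a)=\bigcap_{r\in\R(a)}r\in\UC(\R)\subseteq\C'$. It remains to check $c\subseteq d$. Write $d=\bigcap_{r\in R}r$ for some $R\subseteq\R$; then $R\subseteq\R(d)$, so it suffices to show $\R(d)\subseteq\R(a)$, i.e. every container of $d$ contains $a$. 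Indeed, if $r\in\R$ with $d\subseteq r$, then $x\in d\subseteq r$, so $x\in a\cap r$; since $a$ is an atom and $a\cap r\ne\emptyset$, the alternative noted after Equation~\ref{eq:def_atoms} forces $a\subseteq r$, i.e. $r\in\R(a)$. Hence $c=\bigcap_{r\in\R(a)}r\subseteq\bigcap_{r\in\R(d)}r\subseteq\bigcap_{r\in R}r=d$, so $x\in a=a(c)$ with $c\subseteq d$ and $c\in\C'$.

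Part (iii) now follows by taking cardinalities: the term $c=d$ contributes $|a(d)|$, and the remaining terms range over $c\in\C'$ with $c\subseteq d$ and $c\ne d$, i.e. $c\subsetneq d$ (note $c\subseteq d$ and $c\ne d$ is the same as $c\subsetneq d$ for sets), giving $|d|=|a(d)|+\sum_{c\in\C'\mid c\subsetneq d}|a(c)|$, which rearranges to the claimed formula. The only mildly delicate point is the step $\R(d)\subseteq\R(a)$ in part (i) — making sure we invoke the atom dichotomy ($a\subseteq r$ or $a\subseteq\overline r$) correctly — but this is exactly the property recorded right after Equation~\ref{eq:def_atoms}, so there is no real obstacle; the proof is short and elementary once \Cref{prop:repr} is in hand.
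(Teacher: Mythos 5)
Your proof is correct and follows essentially the same route as the paper's: both establish the reverse inclusion by taking a point $h\in d$ and showing that the combination $\cap_{r\in\R(\set{h})}r$ representing its atom is contained in $d$ because every container of $d$ is a container of $h$, and both get disjointness from the fact that each nonempty $a(c)$ is an atom. The only cosmetic difference is that you pass through the atom $a$ and the dichotomy $a\subseteq r$ or $a\subseteq\overline{r}$ to see $\R(d)\subseteq\R(a)$, where the paper observes directly that $\R(d)\subseteq\R(\set{h})$; these coincide since all elements of an atom have the same containers.
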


\begin{proof}
Reminding that any combination $d$ includes 
$ a(d) = d\cap\paren{\cap_{r\in\R\setminus\R(d)}\overline{r}}$, 
we have $\cup_{c\in \C'  \mid c \subseteq d}a(c) \subset d$.  
Conversely, consider $ h\in d$. The
sets of $\R$ containing $h$ are $ \R(\set{h}) $, and we have $ h\in a(c) $ for
$c=\cap_{r\in \R(\set{h})}r $. As $ \R(d)\subseteq \R(\set{h}) $ (the sets
that contain $ d $ also contain $ h $) and $ d=\cap_{r\in \R(d)}r $, we have 
$c\subseteq d $. Hence $ d\subset\cup_{c\in \C' \mid c \subseteq d}a(c) $.
This union is disjoint as each $a(c)$ is either an atom or is empty.
\end{proof}

We can now state the following proposition about the guarantees of
Algorithm~\ref{algo:add}.

\begin{proposition}\label{prop:incr}
Algorithm~\ref{algo:add} allow to dynamically
update the collection $\UC$ of uncovered combinations of a collection $\R$
using $O(m+\sum_{a\in A_r}\card{\UC'(a)}\log m)$ elementary set operations when
a rule $r$ is added to $\R$,  where
$m$ denotes the number of atoms of $\R$, 
$A_r=\set{a\in\A(\R') \mid a\subseteq r}$ denotes the
atoms of $\R'$ included in $r$,
and $\UC'(a)$ denotes the uncovered combinations of $\R'$ that contain $a$.

More precisely, if the data-structures used for representing sets 
and collections of sets enable elementary set operations within time $T_{set}$,
$p$-collection operations within time $T_{coll}(p)$ and $p$-intersection
queries with overhead $(T_{inter}(p),T_{updt}(m))$, 
then the update of $\UC$ can be performed
in $O(T_{inter}(m) + \card{A_r} T_{updt}(m) 
  + (T_{set}+T_{coll}(m)) \sum_{a\in A_r}\card{\UC'(a)})$ time.
\end{proposition}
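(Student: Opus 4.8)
The plan is to prove Proposition~\ref{prop:incr} in two stages: first the correctness of Algorithm~\ref{algo:add}, then its running time, the latter resting on two combinatorial inequalities about $\mb{Inter}$ and the $c.sup$ lists. For correctness I would set up loop invariants for the three phases. After ``Parent computation'', I claim $\mb{Incl}$ equals $\set{c'\in\UC(\R')\mid c'\subseteq r}$, each $c'$ carrying $c'.parent$ equal to the inclusion-minimal combination $c\in\UC(\R)$ with $c\cap r=c'$, and $c'.atsize$ initialized to $|c'|$ for the new combinations. By Lemma~\ref{lem:addrule} every candidate new combination has the form $c\cap r$ with $c\in\UC(\R)$, and scanning $\mb{Inter}$ by non-decreasing cardinality makes the first witness produced minimal; if two distinct minimal witnesses $c_1,c_2$ generated the same $c'$ they would be incomparable, $c_1\cap c_2$ would be covered in $\R$, and hence $c'$ covered in $\R'$ --- which is exactly what the test $c'.parent\not\subseteq c$ flags. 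In the ``Atom size computation'' phase I would prove, by induction along the scan of $\mb{Incl}$ in cardinality order, that $c.atsize=|a(c)|$ when $c$ is reached: this is Lemma~\ref{lem:cardinals} (which expresses $c$ as the disjoint union of the $a(c')$ over uncovered $c'\subseteq c$) combined with the fact that each new combination and each parent of a new combination has its contribution subtracted exactly once. The genuinely delicate point is the incremental maintenance of $c.sup$: I would check the claim sketched before the statement that $c'.sup$ is recovered from $c'.parent.sup$ by copying elements and also intersecting them with $r$, and that the bad case ``$c\notin\UC$ and $d\notin c'.parent.sup$'' cannot occur because such a $c'$ would have two incomparable minimal preimages and is therefore already discarded as covered. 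With the final cleanup phase this yields $\UC=\UC(\R')$ with $c.sup=\set{d\in\UC(\R')\mid c\subsetneq d}$, $c.atsize=|a(c)|$ and $c.cont=\R'(c)$ for every surviving $c$.

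The complexity analysis hinges on two bounds. First, for every $c\in\mb{Incl}$ we have $|c.sup|<|\UC'(a(c))|$: any $d\in c.sup$ satisfies $a(c)\subseteq c\subsetneq d$ with $d\in\UC(\R')$, so $d\in\UC'(a(c))\setminus\set{c}$; as $c\mapsto a(c)$ is a bijection from $\mb{Incl}$ onto $A_r$ (Proposition~\ref{prop:repr} applied to $\R'$), summing gives $\sum_{c\in\mb{Incl}}|c.sup|\le\sum_{a\in A_r}|\UC'(a)|$. Second, $|\mb{Inter}|\le\sum_{a\in A_r}|\UC'(a)|$, which I would establish by an explicit injection $\mb{Inter}\hookrightarrow\bigsqcup_{a\in A_r}\UC'(a)$: a combination $c\in\mb{Inter}$ that stays uncovered in $\R'$ is sent to $(a,c)$ for some atom $a\in A_r$ with $a\subseteq c$ (one exists since $c\cap r\neq\emptyset$); a combination $c$ that becomes covered in $\R'$ satisfies $a(c)\subseteq r$ (where $a(c)$ is its atom with respect to $\R$), so $a(c)$ remains an atom of $\R'$ and lies in $A_r$, and $c$ is sent to $(a(c),c(a(c)))$. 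Injectivity and the absence of collisions between the two families of images follow from the one-to-one correspondence between uncovered combinations and atoms --- used for both $\R$ and $\R'$ --- plus a short case distinction on whether $a(c)\subseteq r$.

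Finally I would add up the costs phase by phase. ``Parent computation'' performs one $m$-intersection query, costing $T_{inter}(m)+|\mb{Inter}|\,T_{set}$ (the result already comes topologically sorted, so no extra sorting is needed), plus $O(|\mb{Inter}|)$ set and $p$-collection operations, plus $O(|A_r|)$ insertions/removals in the intersection-query structure at $T_{updt}(m)$ each (the new combinations and the later-removed covered ones number $O(|A_r|)$). The ``Atom size'' and ``Remove covered'' phases spend $O(|c.sup|)$ set and $p$-collection operations per $c\in\mb{Incl}$ plus $O(1)$ parent updates, for a total of $O((T_{set}+T_{coll}(m))\sum_{c\in\mb{Incl}}|c.sup|)$. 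Combining with the two inequalities and $|\mb{Inter}|\le\sum_{a\in A_r}|\UC'(a)|$ gives the refined bound $O(T_{inter}(m)+|A_r|\,T_{updt}(m)+(T_{set}+T_{coll}(m))\sum_{a\in A_r}|\UC'(a)|)$; specializing to a balanced binary search tree, where $T_{inter}(m)=O(m\,T_{set})$ and $T_{coll}(m),T_{updt}(m)=O(T_{set}\log m)$, and counting in elementary set operations, yields the stated $O(m+\sum_{a\in A_r}|\UC'(a)|\log m)$. I expect the main obstacle to be precisely the injection establishing $|\mb{Inter}|\le\sum_{a\in A_r}|\UC'(a)|$ together with the correctness of the lazy $c.sup$ update, since both demand a careful bookkeeping of how the uncovered combinations of $\R$ are split, preserved, or turned covered when $r$ is inserted.
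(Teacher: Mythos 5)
Your proposal is correct and follows essentially the same route as the paper's proof: correctness via Lemmas~\ref{lem:addrule} and~\ref{lem:cardinals} with the minimal-parent/covered-collision argument, and complexity via bounding the $sup$ lists by $\card{\UC'(a(c))}$ together with the key inequality $|\mb{Inter}|\le \sum_{a\in A_r}\card{\UC'(a)}$ (which the paper proves by grouping $\mb{Inter}$ by an associated atom $c.atm\in A_r$ and showing $\card{\UC(a.par)}\le\card{\UC'(a)}$, essentially your injection in disguise). One small imprecision: after the parent-computation phase $\mb{Incl}$ is only a \emph{superset} of $\set{c'\in\UC(\R')\mid c'\subseteq r}$, since combinations covered in $\R'$ for reasons other than having two incomparable minimal preimages are only eliminated after the atom-size phase; this does not affect the accounting because $c.sup$ is computed only when $c.atsize>0$.
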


\begin{proof}
As discussed before the correctness of Algorithm~\ref{algo:add} for obtaining
uncovered combinations after adding set $r$ to a collection $\R$ from
$\UC=\UC(\R)$ and $\set{c\cap r\mid c\in \UC}$ results from
Lemma~\ref{lem:addrule}. For a new combination $c$, the $c.sup$ list is
obtained from $c.parent.sup$ and $c.sup$ is updated similarly for $c\in\UC$
such that $c\subseteq r$. The correctness of this approach has already been
discussed in Subsection~\ref{sec:algo}. We develop here the key argument for
ignoring a combination $c'=c\cap r$ when it is produced by several minimal
elements $c_1,\ldots,c_i\in\UC$ such that $c_j\cap r=c'$ for $j$ in $1..i$. If
this happens, we know that $\cap_{j\in 1..i}c_j$ is not in $\UC$, meaning that
it is covered in $\R$ and so is $c'\subseteq \cap_{j\in 1..i}c_j$ in
$\R\cup\set{r}$. We can thus safely eliminate $c'$ in the first
phase of the algorithm.  For the remaining new combinations $c'$, the parent
$c$ of $c'$ is the unique combination $c\in\UC$ such that $c\cap r=c'$ and
which is minimal for inclusion.

The correctness of the atom cardinality computation 
follows by induction on the number combinations in $\mb{Incl}$ processed so
far in the corresponding for loop. Consider a newly created combination $c$.
The initial value of $c.atsize$ is $|c|$.
Assuming that the correct value $b.atsize$ has been obtained for
$b$ processed before $c$ and $c\in b.sup$, $|a(b)|$ has been subtracted from
$c.atsize$ and
Lemma~\ref{lem:cardinals} implies that
$c.atsize=|a(c)|$ when we consider $c$ in the for loop. 
%he correct value will be subtracted from
%$d.atsize$ for $d\in c.sup$ such that $d$ is new or $d$ is the parent of $c$.
For $c$ already in $\UC$ before adding $r$ and for $b\subseteq c$ 
processed before $c$,
$b.atsize$ has been subtracted from
$c.atsize$ only for newly created $b$. For $b\in\UC$, $|a(b)|$ may have
decreased but this difference is compensated by 
$\sum_{b'\in\mb{New} \mid b'\subsetneq b}|a(b')|$. This is the reason why
Algoritihm~\ref{algo:add} updates only $c.parent.atsize$ besides
$d\in c.sup$ such that $d\in\mb{New}$.
The correctness of the atom cardinality computation implies that all 
covered combinations are removed and the correctness of 
Algorithm~\ref{algo:add} follows.

\medskip

We now analyze the complexity of Algorithm~\ref{algo:add}. 
The bound in terms of elementary set operations is obtained when 
balanced binary search tree (BST for short)
are used to store the various collections of sets
(i.e. $\UC$, $\mb{Incl}$, $\mb{New}$ and $c.sup$ for $c\in \UC$).
%% We first assume that balanced binary search trees are
%% used to store sets in $\UC$, $\mb{Incl}$ and $\mb{New}$ allowing
%% to add/remove a set or test membership within a logarithmic number of 
%% elementary set operations. We suppose that a sorted list is used to
%% store the combinations in $c.sup$ that strictly contain $c$.
When adding a set $r$,
finding the combinations in $\UC$ that intersect $r$ is a $m$-intersection
query and can be performed in $O(T_{inter}(m)+|\mb{Inter}|T_{set})$ time
or $O(m\log m)$ set operations using BST (sorting is only necessary in that
case).
%and computing $\set{(c\cap r,c)\mid c\cap r\not=\empty}$
%requires $O(m)$ elementary set operations. 
%The result $\mb{Inter}$ can be sorted with $O(m\log m)$
%with a binary search tree constructed with the appropriate order.
The collection $\mb{Incl}$ is then constructed in 
$O(|\mb{Inter}|\log m)$ operations with BST
or $O(|\mb{Inter}|(T_{set}+T_{coll}(m)))$ with appropriate data-structures.
%can be done in logarithmic number of operations using a binary search tree
%for storing $\C$ and $\mb{Incl}$.
%using a simple sorted list structure for $c'.sup$
%(again using ascending order of cardinality). 
Removing combinations $c$
such that $c.covered=true$ is just a matter of scanning $\mb{Incl}$
again and can be done within the same complexity.
% using a binary
%search tree for storing the collections $\C$ and $\mb{Incl}$.
Let $\I$ denote the combinations included in $\mb{Incl}$ at that point
(just before cardinality computations).
%Let $d'=|\I|$ denote the size of $\I$.
The computation of $c.sup$ for $c\in\I$ is done only when $c.atsize > 0$, i.e.
only if $c$ represents one of the atoms in $A_r$. 
we thus have $\card{I}\le \card{A_r}$.
%included in $r$. 
This requires at most $O(\card{c.parent.sup})$ operations.
Note that for each uncovered combination $d\in c.parent.sup$ yields at
least one uncovered combination in $c.sup$ ($d$ itself or $d\cap r$ or both).
We thus have $\card{c.parent.sup}\le\card{\UC'(a(c))}$.
%can then be done
%in $O(d'\K\log d')$ by merging $c.sup$  with $\set{d\cap r\mid d\in c.sup\mbox{ and }d\cap r\in\mb{Incl}}$ for each $c\in\mb{Incl}$ 
%(the $\log d'$ factor comes from testing membership in $\mb{Incl}$).
The overall computation of $sup$ lists can thus be performed within 
$O(\sum_{a\in A_r}\card{\UC'(a)}\log m)$ set operations with BST
and $O((T_{set}+T_{coll}(m))\sum_{a\in A_r}\card{\UC'(a)})$ 
time with appropriate data-structures.
The computation of class cardinalities and the removal
of covered combinations from the $sup$ lists have same
complexity. 
Removal of covered combinations from $\UC$ and
$\mb{Incl}$ takes $O(\card{A_r})$ collection operations and fits within the same complexity
bound. Additional cost of $\card{A_r} T_{updt}(m)$ is necessary when
maintaining data-structures enabling efficient $m$-intersection queries.
The whole algorithm can thus be performed
in $O(T_{inter}(m) + \card{A_r} T_{updt}(m) + |\mb{Inter}| (T_{set}+T_{coll}(m))
  + (T_{set}+T_{coll}(m)) \sum_{a\in A_r}\card{\UC'(a)})$ time
or using $O(m + |\mb{Inter}|\log m + \sum_{a\in A_r}\card{\UC'(a)}\log m)$ 
elementary set operations with BST.

To achieve the proof of complexity of Algorithm~\ref{algo:add}, we show
$|\mb{Inter}|\le \sum_{a\in A_r}\card{\UC'(a)}$. 
Consider a combination $c\in\UC$ that intersects
$r$. Then $c$ can be associated to an  atom $c.atm$ of $A_r$
included in $c\cap r$ (such atoms exist according to
Lemma~\ref{lem:cardinals}). For any atom $a\in A_r$, let $a.par$ 
denote the atom in $\A(\R)$ that 
contains $a$ (we have $a=a.par$ or $a=a.par\cap r$).
Now for  $c\in\mb{Inter}$, consider the atom $a=c.atm\in A_r$.
As $a.par\in\A(\R)$ is an atom and $c\in\C(\R)$ is a combination intersecting
$a.par$, we have 
$a.par\subseteq c$ and $c\in\UC(a.par)$ is one of the combinations 
in $\UC(\R)$ that
contains $a.par$. For each such combination $c$, $a(c)$ intersects $r$ or
$\overline{r}$ (or both), and $c$ or $c\cap r$ is uncovered in 
$\R'=\R\cup\set{r}$. Both contain $a$ and we have 
$\card{\UC(a.par)}\le \card{\UC'(a)}$. We can thus write
$|\mb{Inter}|=\sum_{a\in A_r}\card{\set{c\in\mb{Inter} \mid c.atm = a}}
\le\sum_{a\in A_r}\card{\UC(a.par)}\le\sum_{a\in A_r}\card{\UC'(a)}$.
%% Each one of theses atoms is included in $r$ too so their number is at most
%% $d_r$. Consider one of these atoms $a$ associated to such a combination
%% $d$. Since atoms are disjoint, there is a unique atom $a^{\prime}$ of $\R$
%% possibly equal to $a$ such that $a=a^{\prime} \cap r$.  Note that because
%% $a^{\prime}$ is an atom of $\R$ that intersects $d$, we must have
%% $a^{\prime} \subseteq d$. To sum up, there are at most $d_r$ such atoms
%% $a^{\prime}$, each one being included in at most $\K$ combinations $d$ of
%% $\R$, we have $|\mb{Inter}|\le \K d_r$.
\end{proof}

It should be noted that in the case of $(d,\ell)$-multi-ranges, we can 
distinguish in the analysis, the computation of $|c|$ for new $c$ that
costs $O(\ell\log \ell)$ from other cardinality manipulations (subtractions and
comparisons) that take $O(\ell)$. This allows to obtain the bound claimed
in Theorem~\ref{th:main}.

Theorem~\ref{th:mainbis} results from iteratively applying 
Algorithm~\ref{algo:add} (or Algorithm~\ref{algo:basic})
for each set of $\R=\set{r_1,\ldots,r_n}$.

\begin{proof}[Proof of Theorem~\ref{th:mainbis}]
From Proposition~\ref{prop:incr}, the overall complexity of atom computation is\\
$O(\sum_{i=1..n}\paren{m_i+\sum_{a\in A_i}\card{\UC_i(a)}}\log m)$
set operations where $m_i$
denotes the number of atoms in $\A(\set{r_1,\ldots,r_{i-1}})$ 
and $A_i$ denotes the
atoms of $\A(\set{r_1,\ldots,r_i})$ included in $r_i$ and
$\UC_i=\UC(\set{r_1,\ldots,r_i})$.  We first consider the case where $\K$ is
unbounded (it is possible to construct examples with $m=n$ and
$\K=\Omega(2^n)$).  As we add a set to $\R$, the number of atoms can only
increase (each atom remains unchanged or is eventually split into two). We thus
have $m_i\le m$ and $\card{A_i}\le\card{\set{a\in\A(\R) \mid a\subseteq r_i}}$.
Using $\card{\UC_i(a)}\le m_{i+1}\le m$, the overall complexity is 
$O(nm+m\log m\sum_i\sum_{a\in\A(r)}\card{\set{r\in\R\mid a\subseteq r}})
=O(nm+\k m^2\log m)$ by definition of average overlap.
The $O(nm^2)$ bound is obtained by using Algortihm~\ref{algo:basic}
instead of Algorithm~\ref{algo:add}.

We now derive a bound depending on the average overlapping degree $\K$
of combinations.
Consider an atom $a\in A_i$ and an uncovered combination $c\in\UC_i(a)$. We can
associate $a$ to an atom $a.desc\subseteq a$ in $\A(\R)$. As $c$ is also a
combination in $\C(\R)$, we have $c\in \C(a.desc)$ where $\C(s)$ denotes the
combinations of $\R$ containing $s$. As the atoms in $A_i$ are disjoint, the
atoms $a.desc$ for $a\in A_R$ are pairwise distinct.  We thus have $\sum_{a\in
  A_i}\card{\UC_i(a)}\le \sum_{a\in \A(\R)\mid a\subseteq r_i} \card{\C(a)}$.
The overall complexity of atom computation is $O(nm+\log m\sum_{a\in
  \A(\R)}\sum_{c\in \C(a)}\card{\set{i\mid a\subseteq r_i}})=O(nm+k\log
m\sum_{a\in\A(\R)}\card{\C(a)})=O(nm+k\K m\log m)$ by definition of overlapping
degree and average overlapping degree respectively.  The refined bound in terms of
$T_{set},T_{coll}(m),T_{inter}(m),T_{updt}(m)$ is obtained similarly.
%% As the number of atoms included in set $r_i$ can only increase
%% as we add more sets, each $d_i$ is bounded by the number $d_{r_i}^-$ 
%% of atoms of $\R$ included in $r_i$.
%% We can then use the fact that the number of inclusion relations
%% between uncovered combinations is $\sum_{c\in\UC(\R)}d_c^-=\sum_{c\UC(\R)}d_c^+$
%% where $d_c^-=\card{\set{b\in\UC(\R)\mid b\subsetneq c}}$
%% and $d_c^+=\card{\set{d\in\UC(\R)\mid c\subsetneq d}}$.
%% As we have $\K =\max_{c\in\UC(\R)}d_c^+$, the get the
%% bound $\sum_{i=1..n}d_{r_i}^-\le m\K $. The overall complexity is
%% thus $\O(nm+m\K^2)$ elementary set operations.
\end{proof}

\subsection*{A.3 Application to forwarding loop detection (Corollary~\ref{cor:loop})}

Corollary~\ref{cor:loop} directly follows from the following claim and
Theorem~\ref{th:main}.

\begin{claim}
Given the collection $\R$ of rule sets of a network $\mathcal{N}$, and for each
atom $a\in\A(R)$ the list $\R(a)$ of sets in $\R$ that contain $a$,
forwarding loop detection can be solved in $O(\k n_Gm)$ time where
$m=\card{\A(\R)}$ is the number of header classes, $\k$ is the average
overlapping degree of $\R$ and $n_G$ is the number of nodes in $\mathcal{N}$.
\end{claim}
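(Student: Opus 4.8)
The goal is to detect a forwarding loop in time $O(\k n_G m)$, given the atom collection $\A(\R)$ together with, for each atom $a$, the list $\R(a)$ of rule sets containing $a$. The plan is to reduce loop detection to $m$ independent directed-cycle checks, one per header class, and to amortize the cost of building each forwarding graph $G_a$ against $\k = \frac{\sum_{a}\card{\R(a)}}{m}$, the average number of containers. First I would preprocess the network into a per-rule-set data structure: for each $r\in\R$, store the list of pairs $(u, (\mathit{priority}_u(r), \mathit{action}_u(r)))$ over all nodes $u$ whose table $T(u)$ contains a rule with set $r$ — i.e. where $r$ appears, at which position, and what it does (forward to which neighbor, drop, or deliver). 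This takes $O(\sum_u |T(u)|) = O(n_G \cdot \max_u|T(u)|)$ time, which can be folded into the setup cost; the point is that this is independent of $m$.

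Next, for each atom $a\in\A(\R)$ I would construct the forwarding graph $G_a = (V, E_a)$. Recall that all headers $h\in a$ match exactly the rule sets in $\R(a)$, so they exhibit identical behavior, and $G_a = G_h$ for any $h\in a$. To build $E_a$: for each node $u$, I must find the highest-priority rule in $T(u)$ that is matched by $a$; by the characterization of atoms, $a$ matches $(r,\cdot)\in T(u)$ iff $a\subseteq r$ iff $r\in\R(a)$. So I iterate over the sets $r\in\R(a)$, look up via the preprocessed structure all nodes $u$ where $r$ appears, and for each such $u$ keep the occurrence of minimal priority seen so far. After processing all of $\R(a)$, each node $u$ that has any matching rule knows its winning action; if it is ``forward to $v$'' I add the arc $uv$ to $E_a$. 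The work for atom $a$ is $O\!\bigl(\sum_{r\in\R(a)} (\text{number of nodes where }r\text{ appears}) \bigr)$ plus $O(n_G)$ to scan nodes and assemble $E_a$. Summed over all atoms this is $O(n_G m + \sum_{a}\sum_{r\in\R(a)} \deg_{\mathcal{N}}(r))$ where $\deg_{\mathcal{N}}(r)$ counts occurrences of $r$; bounding $\deg_{\mathcal{N}}(r)\le n_G$ gives $O(n_G \sum_a \card{\R(a)}) = O(\k n_G m)$, and finally running a linear-time directed-cycle detection (DFS) on each $G_a$ costs $O(|V|+|E_a|) = O(n_G)$ per atom, hence $O(n_G m)$ overall. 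We return ``loop exists'' iff some $G_a$ contains a directed cycle.

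The correctness is essentially the definition: $h$ loops iff $G_h$ has a directed cycle, $G_h$ depends only on the header class $a\ni h$, and the winning rule at each node is correctly identified since priority order within $T(u)$ is preserved and ``$a$ matches $r$'' is exactly ``$r\in\R(a)$''. The main subtlety — and the one place to be careful — is the amortization bookkeeping: I need the per-atom cost to be genuinely $O\!\bigl(n_G + \sum_{r\in\R(a)} \deg_{\mathcal{N}}(r)\bigr)$ rather than $O(n_G\card{\R(a)})$, so that the total telescopes to $O(\k n_G m)$ and not $O(k n_G m)$; this is why the preprocessing stores, per rule set, the explicit list of its node-occurrences, letting me touch only the nodes actually relevant to the rules in $\R(a)$. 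One should also make sure the ``drop'' and ``deliver'' actions are handled (they simply contribute no arc, and a node with no matching rule also drops), and that resetting the per-node ``winning priority'' scratch array between atoms is done in $O(\card{\R(a)})$ rather than $O(n_G)$ — e.g. by undoing only the entries touched — which keeps the bound clean but is not essential if one is content with an extra $O(n_G m)$ term that is already present.
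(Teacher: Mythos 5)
Your proposal is correct and follows essentially the same route as the paper: precompute, for each rule set $r$, the sorted list of its occurrences (node, priority, action); for each atom $a$ merge the lists of the sets in $\R(a)$ to find the winning rule per node and build $G_a$; run linear-time cycle detection; and amortize via $\sum_{a}\card{\R(a)}=\k m$. Your extra care about charging only touched nodes and resetting scratch arrays is a harmless refinement of the paper's $O(\card{\R(a)}\,n_G)$ per-atom merge bound, which already suffices.
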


\begin{proof}
For that, we assume that each rule set $r\in\R$ is associated with the list
$L_r$ of forwarding rules $(r,a)$ that have rule set $r$.  Each such rule is
also supposed to be associated to the node $u$ whose table contains it and the
index $i$ of the rule in $T(u)$. Each list $L_s$ is additionally supposed to be
sorted according to associated nodes.  Such lists can easily be obtained by
sorting the collection of all forwarding tables according to the predicate
filters of rules.
%% (Sorting forwarding tables takes 
%% $O(nn_G\log\min(n,n_G))$ elementary set operations in general and
%% $O(\ell nn_G)$ time in the case of wildcard of multi-range rules by using
%% a trie structure.) 

The claim comes from the fact that uncovered
combination in $\UC(\R)$ %inclusion-wise 
represent atoms of $\A(\R)$. 
It follows from testing for each header
class $a\in\A(\R)$ whether the graph $G_a=G_h$ for all $h\in a$ 
has a directed cycle.
$G_a$ is computed by merging the lists $L_s$ for $s\in\R(a)$ in time
$O(\card{\R(a)}n_G)$. This graph has at most $n_G$ edges and cycle detection
can be performed in $O(n_G)$ time. The overall complexity follows from
$\k m=\sum_{a\in\A(\R)}\card{\R(a)}$ by definition of $\k$.
\end{proof}

\section*{Appendix B : Difficult inputs for previous works}

\subsection*{B.1 Veriflow}

Verfilow~\cite{veriflow} represents the complementary of a multi-range
$r=[a_1,b_1]\times\cdots\times[a_d,b_d]$ as the union of 
$2d-1$ multi-ranges (at most):
\begin{itemize}
\item $[0,a_1-1]\times H_{2..d}$ and $[b_1+1,\infty_1]\times H_{2..d}$,
\item $[a_1,b_1]\times [0,a_2-1]\times H_{3..d}$ and
      $[a_1,b_1]\times [b_2+1,\infty_2]\times H_{3..d}$,
\item $\cdots$,
\item $[a_1,b_1]\times \cdots\times [a_{d-1},b_{d-1}]\times [0,a_d-1]$ and
      $[a_1,b_1]\times \cdots\times [a_{d-1},b_{d-1}]\times [b_d+1,\infty_d]$,
\end{itemize}
where $\infty_i$ denotes the maximum
possible value in field $i$,
and $H_{i..j}=[0,\infty_i]\times\cdots\times [0,\infty_j]$ denotes the
multi-range of all possible values for fields $i,\ldots,j$ for
$1\le i \le j \le d$.

The difficult input for Veriflow 
%(as indicated in Subsection~\ref{sec:exp-verif}) 
consists in a network
with $n=dp+1$ rules associated to the following multi-ranges:
\begin{itemize}
\item $r_0=H_{1..d}$,
\item $r_i^j=H_{1..i-1} \times [a_j,a_j] \times [b,b]^{d-i}$ 
for $i,j\in [1..d]\times [1..p]$.
\end{itemize}
%where $C_{i..j}$ denotes the singleton multi-range $[\infty_i,\infty_i]\times\cdots\times [\infty_j,\infty_j]$ for
%$1\le i \le j \le d$. 
%The sets $r_i^j$ for $i,j\in [1..d]\times [1..p]$ are pairwise disjoint. The
%number of header classes is thus $m=dp+1=n$.

Consider the sub-classes generated while computing $r_0\cap \paren{\cap_{i,j\in
    [1..d]\times [1..p]} \overline{r_i^j}}$.  The union of multi-ranges
representing $\overline{r_i^j}$ contains in particular $H_{1..i-1}\times
[0,a_j-1]\times H_{i+1..d}$ and $H_{1..i-1}\times [a_j+1,\infty_i]\times
H_{i+1..d}$.
% where $A_i=[0,a_j-1]$ and $B_i=[a_i+1,\infty_i]$. 
This implies that Veriflow generates on such an input
all $p^d$ sub-classes of the form $I_1\times\cdots\times I_d$ 
with $I_i=[a_j+1,a_{j+1}-1]$ for some $j\in [0,d]$ (we set $a_0=-1$).
Forwarding loop detection of an $n_G$-node network thus requires
$\Omega(p^d n_G)=\Omega(\paren{\frac{n}{d}}^dn_G\frac{m}{d})$ time for Veriflow.
%% If all fields have at least $\log
%% \frac{2n}{d}$ bits, it possible to repeat $\frac{n}{d}$ times this construction
%% (with increasing values for all $a_i,b_i$) to obtain $n$ pairwise disjoint
%% rules (excluding $r_0$) that generate $m=n+1$ header classes but where Veriflow
%% produces $\frac{n}{d}2^d$ sub-classes to test. Each sub-class test consists in
%% constructing the forwarding graph for headers in the sub-class 
%% (in $\Theta(\average{k}n_g)$ time) and search it for a cycle. The overall
%% loop detection of Veriflow thus requires 
%% $\Omega(\frac{2^d}{d}\average{k}n_G m)$ time on such examples.
As this example has overlappingd degree 2, this justifies the two lower-bounds
indicated for Veriflow in Table~\ref{tab:comp} for $d$-multi-ranges.

It is possible to adapt Veriflow to support general wildcard matching by
considering each field bit as a field. The
wildcard expressions 
$r_0=*^\ell,r_1=01^{\ell-1},\ldots, r_\ell=0^{\ell-1}1$ will then 
similarly generate all $2^{\ell/2}$ sub-classes obtained by concatenation of
words $10$ and $11$. This justifies the two lower-bounds
indicated for Veriflow in Table~\ref{tab:comp} for $\ell$-wildcards.

\subsection*{Appendix~B.2 : HSA/NetPlumber approach}

The NetPlumber approach could be generalized to more general types of rules.
However, we show that the simple heuristic for emptiness tests described
in \Cref{sec:netpl-exp} is not sufficient.
%Subsection~\ref{sec:linfrag}, bounded overlapping degree would then guarantee
%polynomial time execution. Interestingly, we show that this is not the case
%when overlapping degree is not bounded.  
We provide an example where the HSA/NetPlumber approach generates an
exponential number of paths while the number of classes is linear if it relies
solely on this heuristic.  Consider header space $H=\set{1..n}$ and the
following $n+1$ rule sets:
$r_1=\overline{\set{1}},\ldots,r_n=\overline{\set{n}}$ and $r_{n+1}=H$.
Consider a network $\mathcal{N}$ with $n_G=n(n+1)$ nodes. Each node $u_{i,j}$
for $0\le i\le n$ and $1\le j\le n$ has table
$T(u_{i,j})=(r_{i+1},FwdD_{i+1,1}),\ldots,(r_n,FwdD_{i+1,n}),(r_{n+1},FwdD_{i+1,n})$
where action $FwdD_{i,j}$ indicates to forward packets to node $u_{i,j}$ for
$i\le n$ and to drop packets for $i=n+1$.  Starting from $u_{0,1}$, the HSA
approach generates a path for each combination $r_{i_1}\cap\cdots\cap r_{i_p}$
for $p\le n$ and $1\le i_1<\cdots<i_p\le n$. This path goes through
$u_{0,1},u_{1,i_1},\ldots,u_{p,i_p}$ and then through
$u_{p+1,n},\ldots,u_{n,n}$. It is constructed at least for term
$r_{i_1}\cap\cdots\cap r_{i_p}\setminus \cup_{j\notin\set{i_1,\ldots,i_p}}
r_j$. The heuristical emptiness test of NetPlumber does not detect that it is
empty since $r_{i_1}\cap\cdots\cap r_{i_p}$ contains $j$ for $j\notin
\set{i_1,\ldots,i_p}$ and it is not included in $r_j$.  The number of paths
generated is thus at least $\sum_{1\le p\le n}{n \choose p}=2^n-1$. However,
the header classes are all singletons of $H$ and their number is $m=n$. Note
the high overlapping degree $k=n$ of this collection of rule sets.

\subsection*{Appendix~C : Linear fragmentation versus overlapping degree}
%\vspace{-.3cm}

Interestingly, a complexity analysis of HSA loop detection is given in the
technical report~\cite{hsa-report} under an assumption called ``linear
fragmentation''. This assumption, which is based on empirical observations,
basically states that there exists a constant $c$ such that a given 
routing path $P$ will branch at most $c$ times. More precisely it states that
a term $s=c_0\setminus \cup_{i=1..p} c_i$ in the expression representing
$H_P$, the set of headers that can follow $P$, 
intersects at most $c$ of the rule sets in the table of the end node of $P$.
A simple induction allows to bound
the number of terms generated by all paths of $p$ hops from a given source by 
$c^pn$.  Under linear-fragmentation, the time complexity of HSA loop
detection (excluding emptiness tests) is thus proved to be $O(c^{D_G}D_Gn^2m_G)$ 
in \cite{hsa-report} where $D_G$ is the diameter
of network graph $G$, $n$ the number of rules, and $m_G$ the number of ports in
$G$ (in our simplified model each node has a single input port and $m_G=n_G$
the number of nodes in $G$). 
It is then argued that in practice the constant
$c$ gets smaller as the length $p$ of the path considered increases and that
practical loop detection has complexity $O(D_Gn^2m_G)$ as claimed
in~\cite{hsa}. However, it is not rigorous to neglect the (exponential)
$c^{D_G}$ factor under the sole linear-fragmentation hypothesis.
%% Moreover, the above complexity bound accounts only for 
%% loop detection of formal expressions of header classes.
%% It does take into account the emptiness test required for each such expression
%% encountered. We show in Section~\ref{sec:comparison} 
%% that a single emptiness test can be exponential
%% in $\ell$ because of complement computations.

%Concerning linear fragmentation,
Additionally,
we think that low overlapping degree provides a simple explanation for the
phenomenon observed by Kazemian et al.: as the path length increases, the
terms representing the header that can traverse the path
result from the intersection of more
rules and become less likely to intersect other rules when overlapping degree
is limited. Moreover, bounded overlapping degree $k$ implies that the
number of terms generated by HSA within $p$ hops is bounded by 
$O(n^{\min(p,k)})$. The total number of terms generated is thus
bounded by $O(n_G n^{k})$. This guarantees
that all HSA computations besides emptiness tests 
remain polynomial for constant $k$. In contrast, with
the example provided in Appendix~B.2 (which has unbounded overlapping degree),
the HSA approach can
generate exponentially many paths compared to the number of header classes
in the context of general rules.

\subsection*{Appendix~D : Related notion of weak completeness}

In the context of resolution of conflicts between rules, 
Boutier and Chroboczek~\cite{boutier} introduce the concept of
\emph{weak completeness}: a collection $\R$ is weakly complete
iff for any sets $r,r'\in \R$, we have
$r\cap r' = \cup_{r''\subseteq r\cap r'} r''$. They show that this is a
minimal necessary and sufficient condition 
for all rule conflicts to be solved when priority of
rules extends inclusion (i.e. $r$ has priority over $r'$ when 
$r\subsetneq r'$).
Interestingly, we can make the following connection with this work:
%One can easily see that 
%Lemma~\ref{lem:cardinals} implies
%the following characterization of $a(c)$ for any combination $c$:
%\begin{corollary}\label{cor:cover}
given a combination collection $\C'\subseteq \C(\R)$ containing $\UC(\R)$, we
have $a(c)=c\setminus\cup_{c'\in \C' \mid c' \subsetneq c}c'$ for all $c\in\C'$.
(See Lemma~\ref{lem:cardinals} in Appendix~A.2.)
%\end{corollary}
%
%% \begin{proof}
%% According to Lemma \ref{lem:cardinals}, we have $a(c)=c\setminus\cup_{c'
%%   \subsetneq c}a(c')$ (recall that classes are disjoint). As the same lemma
%% also implies $c'=\cup_{c''\subseteq c'}a(c'')$, we have $\cup_{c' \subsetneq
%%   c}c' = \cup_{c'' \subsetneq c}a(c'')$, the claim follows.
%% \end{proof}
%
%The above corollary 
%This allows to make the following connection with the notion of weak 
%completeness:
% introduced by Boutier and Chroboczek~\cite{boutier}.
%According to their terminology, 
This allows to show that $\UC(\R)$ is weakly complete.
It is indeed the smallest collection of combinations of $\R$ that contains 
$\R\cup\set{H}$ and that is weakly complete. Our work thus also provides 
an algorithm for computing such an optimal ``weak completion''.
%% That is for any $c,c'\in\UC(\R)$, 
%% $c\cap c'\subseteq \cup_{b\in \UC(\R) \mid b \subseteq c\cap c'}b$.
%% Indeed, if $c\cap c'$ is uncovered, it is in $\UC(\R)$,
%% otherwise $a(c\cap c')=\emptyset$ and we have the desired property
%% according to the corollary.

\subsection*{Appendix~E : Topological header classes}

Inspired by the HSA/NetPlumber~\cite{hsa,netplumber} approach, we can 
refine the definition of header classes. We fix a source node $s$. A
forwarding path $P$ originating from $s$ is a sequence 
$(u_0,i_0),\ldots,(u_p,i_p)$ where $u_0,\ldots,u_p$ are the nodes
encountered along the path
and $i_0,\ldots,i_p$ are the indexes of the
forwarding rules followed: rule $i_0$ is applied at node $u_0=s$, 
then rule $i_1$ at node $u_1$ and so on. We then define two headers as 
topologically equivalent from $s$ if they follow the same forwarding paths.
Note that each topological class for that relation corresponds to a unique
path (if some headers match two distinct rules at a node, 
the first one is followed).

We now show that the topological classes (from $s$) are
indeed certain atoms of partial collections of rule sets.
Given a forwarding path $P=(u_0,i_0),\ldots,(u_p,i_p)$.
Let $\R^j=\{r^j_1,\ldots,r^j_{n_j}\}$ denote the collection of rule sets
of node $u_j$ where $r^j_1,\ldots,r^j_{n_j}$ corresponds to the order of the
$n_j$ rules in $T(u_j)$. Let also $\R^j_i=\{r^j_1,\ldots,r^j_i\}$
denote the partial collection of the $i$ first rule sets.
The set of headers that can follow rule $i$ in node $u_j$
is $s^j_i=\overline{r^j_1}\cap \cdots \cap \overline{r^j_{i-1}} \cap r^j_i$. 
If $P$ corresponds to a topological class $cl(P)$, we have
$cl(P)=s^0_{i_0}\cap\cdots\cap s^p_{i_p}$.
As the only positive terms of this intersection are $r_{i_0},\ldots,r_{i_p}$,
$cl(P)$ is indeed the atom $a(c_P)$ associated to the combination
$c_P=r_{i_0},\ldots,r_{i_p}$ in $\A(\R^0_{i_0}\cup\cdots\cup \R^p_{i_p})$.
Note that we consider a different partial collection
$\R^0_{i_0}\cup\cdots\cup \R^p_{i_p}$ for each path.
Our framework can thus be applied to test whether the combination
$c_P$ associated to any path $P$ does correspond to a topological class
by testing the emptiness of $a(c_P)$. We detect a forwarding loop (starting
from $s$) as soon as a path $P$ reaches a former node of $P$.

Our incremental algorithm is well suited for incrementally 
augmenting the collection of rules as we progress along the path.
Using a persistent style implementation for the data structure 
storing collections (as classically done when implementing binary search
trees in functional languages), we can step back at a branching node $v$
and follow a different branch without having to recompute the collection
of atoms generated by the path up to $v$. The collection is also
incrementally augmenting as we progress in routing table of a branching
node $v$ and explore different paths. This allows to efficiently
perform a search of the graph in a depth first search manner.

It is important to note that two different explored paths $P$ and $Q$ 
are associated to different combinations $c_P\not= c_Q$.
To see this, suppose that $P$ and $Q$ branch at node $u$:
$P$ follows rule $r_i$ of $T(u)$ while $Q$ follows $r_j$ with $i<j$.
After $(u,i)$,  path $Q$ cannot follow any rule associated 
with set $r_i$ due to emptiness tests. This ensures that the number
of paths generated is bounded. Similarly to the proof of 
Proposition~\ref{prop:incr} and Theorem~\ref{th:mainbis}, 
each combination $c_P$ generated by a path $P$
can be associated to an atom of $\A(\R)$ where
$\R$ denotes the collection of all rule sets in the network.
The number of topological classes is thus always bounded by $m=|\A(\R)|$.
However, for a given collection $\R$ of rule sets generating $m$ atoms, 
one can easily produce a network topology with $m$ topological classes.

When performing the search from source $s$,
the number of paths (and prefixes of paths) generated is
at most $\sum_{a\in \A(\R)}|\UC(a)| = K m \le nm$ where 
$\UC=\UC(\A)$ are the uncovered combinations generated by $\R$. 
As each path prefix accounts for one call of
Algorithm~\ref{algo:basic}, this search thus costs
$O(n m^3)$ elementary set operations.
Repeating this for all possible source nodes, we get an overall complexity
of $O(n_G  n m^3)$ operations for forwarding loop detection. 
(This complexity analysis can be refined
in terms of overlapping degree and optimized by using potentially 
Algorithm~\ref{algo:add} instead of Algorithm~\ref{algo:basic}.)

Note that the number of paths explored could grow exponentially 
without appropriate emptiness tests as exemplified in Appendix~B.2.
Although this approach remains polynomial in terms of $n$ and $m$, 
its complexity guaranties are less interesting than what we propose in
Section~\ref{sec:algo}. However, it allows to extend our framework with
write actions as detailed in the next appendix.

\subsection*{Appendix~F : Write actions}

First note that allowing any write operations and variable header length
(by allowing to push sub-headers) make the forwarding loop detection problem
undecidable as we can then easily simulate a pushdown automaton with several
stacks. However, in the context of MPLS, it is natural to allow to push and pop
MPLS headers (and only that type). In the classical functioning of MPLS, 
each push or swap action always writes a fixed label value, 
rules taking into account the MPLS header always have
higher priority and base their decisions on the outermost label only
(and no other field).
In that case, paths with MPLS forwarding can be factored out by adding 
shortcut edges (from the first push action to last pop action) in a preliminary
step which should also include loop detection for each label value pushed
somewhere in the network. This can clearly be performed in 
polynomial time. We reserve the study of more general write action with push and
pop actions for future work. However, we now sketch how to handle
write actions in the context of fixed length headers.

We suppose that a forwarding rule applied to a packet 
can perform a write action before forwarding the packet. 
We typically think of write actions that write some fixed values at fixed
positions. We then see a write action
as a projection of the header space in a smaller sub-space. More generally,
we assume that each write action $w$ is associated with a function 
$p_w : H \rightarrow H$ with the following properties 
(we set $p_w(s)=\set{p_w(h) \mid h\in s}$ for $s\subseteq H$) :
\begin{itemize}
%% \item inclusion preservation:
%%   for $s\subseteq s'$, we have $p_w(s) \subseteq p_w(s')$;
\item intersection preservation:
  for $s\cap s'\not=\emptyset$, we have $p_w(s\cap s') = p_w(s) \cap p_w(s')$;
\item two consecutive write operations $w$ and $w'$ are equivalent to
  a single write operation $w''$ with $p_{w''} = p_{w'} \circ p_w$,
  $w''$ is then called a write pattern assumed to be efficiently computable.
\end{itemize}
We do not need to formally assume that $p_w$ is a projection (i.e. $p_w\circ
p_w=p_w$) although this is typically the case.

We additionally make the following assumption with respect the data-structure
$\D$ used for sets:
\begin{itemize}
\item $p_w(s)$ is in $\D$ and can be computed in $O(T_\ell)$ time.
\end{itemize}
All the above requirements are clearly met for any write action 
(of fixed bits at fixed positions) in the context 
of wildcard expressions. In the context of
prefix matching we can only allow to write a prefix of bits to ensure
that a set represented by a prefix is always mapped to a set that can 
be represented by a prefix. 
We can generalize this to the context of interval
matching when low order bits are ordered first (little-endian order).
We still represent a set $s$ with an interval $[a,b]$ but we additionally
consider a write pattern $w$ and the number $i$ of bits of $w$ 
written so far.
%(the $i$ low order bits of $a$ and $b$ are then identical).
The write pattern $w$ is simply
an integer with same bit representation as $a$ and $b$.
To test whether a header $h$ is in $s$, we first check that its $i$ lowest bits
are identical to those of $w$, then shift $a$, $b$, and $h$ by
$i$ bits to erase low order bits, obtaining $a',b',h'$ respectively
and test whether $a' \le h' \le b'$. 
Other set operations are handled similarly.
Note that the computation of $p_{w'}([a,b])$ just requires to update
the write pattern which can be globally shared for a collection of sets.
A similar trick can be used with wildcard expressions.
In the context of multi-ranges,
our model thus enables prefix writes on one or several fields.
(This includes in particular writing of exact values in one or several fields.)

Forwarding loop detection can then be performed using the approach detailed
in Appendix~E by additionally decorating each node of an explored path with a
write pattern $w$ equivalent to the sequence of write actions performed so far
along the path. A forwarding loop is detected when a former node $u$ associated
to write pattern $w$ is reached again with same write pattern $w'=w$ while
growing a path $P$ containing $u$.  If the number the of write patterns that
can be generated by combining various write actions is bounded by $p$, then the
length of each path is at most $n_G p$.

Our algorithms for incremental atom computation can easily be extended in that
context: when a write action $w$ is performed when growing a path $P$, we
update the collection of uncovered combinations representing the atoms of the
current collection $\R_P$ of rule sets encountered along $P$ as follows.
Each combination $c$ is replaced by $p_w(c)$ (which typically 
amounts to a single update a the global write pattern 
in the contexts of wildcard expressions and multi-ranges).
We then recompute the inclusion relations between them and
atom sizes in $O(m^2)$ elementary set operations as in
Algorithm~\ref{algo:basic}. Combinations that are detected as
covered are removed (or just saved apart for efficient backtracking). 
We finally obtain a valid representation
of $\set{p_w(r) \mid r\in\R_P}$ since $p_w$ preserves
intersection. 

Note that write operations can only reduce the number of atoms.
Forwarding loop detection can thus be performed within a factor
$n_G p$ compared to the search procedure of Appendix~E, that is in
$O(n_G^2 n m^3 p)$ elementary set operations. This is again polynomial
in terms of number of rules $n$, number of atoms $m$ generated by the 
collection of rule sets, and the number $p$ of write patterns that can be 
generated by write operations.

\end{document}